\newcommand{\bbN}{\mathbb{N}}
\newcommand{\first}[1]{{f(#1)}}
\newcommand{\last}[1]{{\ell(#1)}}
\newcommand{\fact}[1]{{Fac(#1)}}
\newcommand{\suff}[1]{{Suf(#1)}}
\newcommand{\factsigma}[1]{{Fac_\sigma(#1)}}
\newcommand{\suffsigma}[1]{{Suf_\sigma(#1)}}
\newcommand{\pekna}{\mathcal{C}}
\newcommand{\wqo}{\textit{wqo}}
\newcommand{\lesigma}{\leq_{\sigma}}
\newcommand{\naomeg}{^{\omega}}
\newcommand{\gr}[1]{\mathscr{#1}}
\newcommand{\pv}[1]{\mathsf{#1}}
\newcommand{\eval}{\operatorname{eval}}
\mathchardef\mhyphen="2D
\newcommand{\xRightarrow}[2][]{\ext@arrow 0359\Rightarrowfill@{#1}{#2}}
\newenvironment{lemmaagain}[1]{\begin{trivlist}
\item[] \textbf{Lemma~\ref{#1}.}}{\end{trivlist}}
\newenvironment{propositionagain}[1]{\begin{trivlist}
\item[] \textbf{Proposition~\ref{#1}.}}{\end{trivlist}}
\newenvironment{theoremagain}[1]{\begin{trivlist}
\item[] \textbf{Theorem~\ref{#1}.}}{\end{trivlist}}
\newenvironment{coragain}[1]{\begin{trivlist}
\item[] \textbf{Corollary~\ref{#1}.}}{\end{trivlist}}
\newcommand{\dukazjinde}{\hfill\ensuremath{\spadesuit}}
\newcommand{\dukazjindeintro}{\ensuremath{\spadesuit}}
\newcommand{\uvoz}[1]{``#1"} 
\begin{document}
%

\title{Well Quasi-Orders Arising from Finite \\ Ordered Semigroups
\thanks{The research was supported by Grant 19-12790S of the Grant
Agency of the Czech Republic.}} 
\titlerunning{Well Quasi-Orders Arising from Finite Ordered Semigroups}
%
\author{Ond\v rej Kl\'ima
\and
Jonatan Kolegar
}  
\authorrunning{O. Kl\'ima and J. Kolegar}
%
\institute{Department of Mathematics and Statistics, Masaryk University \\ 
Kotl\'a\v rsk\'a 2, 611 37 Brno, Czech Republic \\
\email{klima@math.muni.cz, kolegar@math.muni.cz}
}
\maketitle              
\begin{abstract}

In 1985, Bucher, Ehrenfeucht and Haussler studied derivation relations associated with a given set of context-free rules. 
Their research motivated a question regarding homomorphisms from the semigroup of all words onto a finite ordered semigroup. 
The question is which of these homomorphisms induce a well quasi-order on the set of all words. We show that this problem is decidable and the answer does not depend on the homomorphism, but it is a property of the ordered semigroup.

\keywords{finite semigroups  \and
well quasi-order \and
unavoidable words
}
\end{abstract}


\section{Introduction}
\label{s:introduction}

The notion of well quasi-order (\wqo) is a well-established tool in mathematics and in many areas of theoretical computer science that was rediscovered by many authors 
(see~\cite{kruskal} by Kruskal). A comprehensive overview of the applications of the notion in theory of formal languages and combinatorics on words can be found in the book~\cite{delucavarri} by de Luca and Varricchio or in the survey 
paper~\cite{alessvarr-dlt08} by D'Alessandro and Varricchio.
Since our contribution belongs to formal language theory, 
we recall the central notion of \wqo\ directly for the set of all words 
over a finite alphabet $A$. A quasi-order $\le$ on a set $A^*$ is  \wqo\ if it has 
no infinite antichains and no decreasing infinite chain (the latter property is often called \emph{well-foundness}). 
There are several equivalent conditions of the notion (see, \emph{e.g.},
~\cite[Theorem 6.1.1]{delucavarri}); among them we recall the following: for every infinite
sequence of words $w_1,w_2, \dots  $ there exist integers $0<i<j$ such that $w_i \le w_j$.   
We point out that the important property which makes the notion of \wqo\ a useful tool is that every upper closed subset of $A^*$ with respect to a \wqo\ $\le$ is a regular language
(see~\cite[Theorem 6.3.1]{delucavarri}). 

The first example of \wqo\ in the area of formal languages was 
given by Higman~\cite{higman}. We mention the simplest 
consequence of the general statement, namely the result that the \emph{embedding} 
relation $\unlhd$ on $A^*$ is \wqo.
The embedding relation $\unlhd$ is often called subword ordering, because
a word $u$ \emph{embeds} in a word $v$ if $u$ is a scattered subword of $v$, 
\emph{i.e.}, $u\unlhd v$ if there are factorizations of the same length 
$u=a_1\dots a_k$ and $v=v_1\dots v_k$ such that, for all $i\in \{1,\dots, k\}$, we have 
$a_i\in A$, $v_i\in A^+$, and $a_i$ appears in $v_i$.     

The considered notion of embedding relation can be modified by 
requiring different conditions on the factorizations. 
For example, if the alphabet $A$ is quasi-ordered by $\preceq$, then we may replace 
the condition that $a_i$ appears in $v_i$ by the condition that there is 
$j$ such that $a_i\preceq a_j$ and $a_j$ appears in $v_i$. 
In this way we obtain a quasi-order which is a well known (and more general) instance of Higman result. 
Another variant is the following \emph{gap embedding} considered by 
Sch\"{u}tte and S. G. Simpson in~\cite{ss85} for an alphabet equipped by linear
order $\sqsubseteq$: the defining 
condition that $a_i$ appears in $v_i$  is replaced by the condition that the letter 
$a_i$ is the last letter in $v_i$ and it is a minimal 
letter in $v_i$ with respect to $\sqsubseteq$. Notice that yet another modification
of gap embedding is the \emph{priority embedding} 
in~\cite{hss04} by  Haase, Schmitz, and Schnoebelen.

Our paper concentrates on an application of \wqo s\ 
motivated by the work of Bucher, Ehrenfeucht and Haussler \cite{bucher},
which leads to a purely algebraic question in the realm of ordered semigroups. 
Notice that the topic is nicely overviewed in the recent survey 
paper~\cite{pin-lata-2020} by Pin.

Before we introduce the primary question, we briefly recall the role of ordered semigroups in the algebraic theory of regular languages. At first, when we talk about an ordered semigroup $(S,\cdot,\le)$, we assume that the partial order $\le$ is stable, \textit{i.e.}, compatible with the multiplication $\cdot$
in the sense that, for arbitrary $x,y,s\in S$, the inequality $x\le y$ implies both 
$s\cdot x\le  s\cdot y$ and $x\cdot s \le y\cdot s$. The finite ordered semigroups 
are used to recognize regular languages similarly to unordered semigroups -- see,
{\it e.g.}, the fundamental survey on the algebraic theory of regular languages~\cite{pin-handbook} by Pin.
The modification is natural as the syntactic semigroup of a regular language is implicitly ordered in the following way. In the syntactic congruence of the regular language, 
words are related if they have the same set of contexts putting the words into the language. 
Then one may also compare these sets of contexts by the inclusion relation; this comparison
gives the syntactic quasi-order and consequently the partial order on the syntactic semigroup of the considered regular language.
Let us note that in the literature the syntactic quasi-order is not always defined in this way, but the dual quasi-order is considered instead, \textit{e.g.}, in \cite{pin-handbook}.

The starting point in the study of well quasi-orders in \cite{bucher} was a research by Ehrenfeucht, Haussler, and Rozenberg~\cite{ehrenfeuchtHR83} concerning  certain rewriting systems preserving regularity, 
where a well quasi-order plays a role of a sufficient condition guaranteeing the required property of the rewriting system. 
Particular attention in that research is paid to rewriting systems $R$ with rules of the form $a\rightarrow u$ with $a$ being a letter and $u$ being a word.
For such a rewriting system, in~\cite{bucher}, there are stated equivalent conditions to the fact that the derivation relation $\xRightarrow{*}_R$ is a well quasi-order. For example, one of the equivalent conditions is that 
the set $L=\{aua \mid a\in A, u\in A^*, a \xRightarrow{*}_R aua \}$ is unavoidable (in the sense that every infinite word over the alphabet $A$ contains a finite factor from the language $L$).
Unfortunately, they did not give algorithms to test the conditions.
They also showed that a derivation relation that is a \wqo\ comes from
a rewriting system induced by a semigroup homomorphism $\sigma : A^+ \rightarrow S$ onto a finite ordered semigroup $(S,\cdot,\le)$ by the following formula:
$$R_\sigma=\{ a\rightarrow u \mid a\in A, u\in A^+, \sigma (a)\le \sigma (u)\}.$$ 
Finally, the open question is to characterize 
homomorphisms $\sigma$ such that $\xRightarrow{*}_{R_{\sigma}}$ is a well quasi-order. 
Another research goal is a characterization of those finite ordered semigroups $S$, 
such that the relation $\xRightarrow{*}_{R_{\sigma}}$ is a \wqo\ for every 
alphabet $A$ and a homomorphism $\sigma : A^+ \rightarrow S$. For the purpose of this paper we call these ordered semigroups \emph{congenial}.

Let us note that the examples of \wqo\ mentioned earlier also fit to the introduced scheme of relations arising from 
a homomorphism onto a finite ordered semigroup. 
Indeed, for an alphabet $A$ 
we may consider an ordered semigroup $(P(A),\cup,\subseteq)$ consisting of non-empty subsets of $A$ equipped with the operation of union, and ordered by the inclusion relation. Taking then the homomorphism $\sigma : A^+ \rightarrow P(A)$, where $\sigma(a)=\{a\}$,
the above considered relation  $\xRightarrow{*}_{R_{\sigma}}$ coincides with 
the embedding relation $\unlhd$. 
Similarly, we can consider an appropriate homomorphism onto an ordered semigroup in the other mentioned examples.\footnote{For a variant of Higman's Lemma where the alphabet $A$ is equipped with the quasi-order $\preceq$, we take for the ordered semigroup $S$ the subsemigroup of $P(A)$ consisting of all downward closed subsets of $A$
with respect to the considered quasi-order $\preceq$.
For the gap embedding, one may construct the semigroup $A\times A$ (ordered by equality), 
where the multiplication is given by $(a,b)\cdot (c,d)=(min(a,c),d)$, where $min$ is taken with respect to $\sqsubseteq$ (and the homomorphism $\sigma\colon A\to A\times A$ is the diagonal mapping).
}

Up to our knowledge, and also according to the survey paper~\cite{pin-lata-2020},
there is just one significant contribution to the mentioned open questions.
Namely, in the paper~\cite{kunc} by Kunc, the questions are solved for the semigroups ordered by the equality relation.
It is stated in~\cite{kunc} (implicitly contained in the proof of Theorem 10) that the property 
depends only on the semigroup $S$, not on the actual homomorphism $\sigma$.
The congenial semigroups ordered by the equality relation are characterized as finite chains of finite simple semigroups. One of the equivalent characterizations of this transparent structural property is the following condition which can be checked in polynomial time: 
for every $s,t\in S$, we have $(s\cdot t)^\omega \cdot s=s$ or $t\cdot (s\cdot t)^\omega =t$, where $(s\cdot t)\naomeg$ is the power of $s\cdot t$ which is idempotent.

Our research aims to give an analogous characterization in the general case; however, we do not fulfill that program yet, and our contribution brings tentative results. 
As the main result, we show that the problem of whether a homomorphism induces a well quasi-order is decidable. 
The proof is almost a straightforward application of the mentioned characterization from~\cite{bucher}. 
Furthermore, we give some necessary conditions related to the studied property, and, on the other hand,
we establish other conditions which ensure the property.
Next, we show that the mentioned side result from~\cite{kunc} holds in the full generality: 
the property is indeed a property of an ordered semigroup and does not depend on the homomorphism.

The proofs of results marked by the symbol $\dukazjindeintro$ are available in Appendix.


\section{Preliminaries}
\label{s:preliminaries}
We briefly recall basic notions and fix notation used in the paper. We start with that from 
semigroup theory.
When we talk about ordered semigroup or semigroup, 
we write simply $S$ instead of formal notation $(S,\cdot,\le)$ and
 $(S,\cdot)$.
Throughout the paper we work with finite semigroups with the exception of the free 
monoid $A^*$ (the free semigroup $A^+$) formed by (non-empty) words over an alphabet $A$.
We use the symbol $\varepsilon$ for the empty word.
An element $e$ in a semigroup $S$ is called \emph{idempotent} if  $e\cdot e=e$.
For all $s\in S$, the set $\{s^n\,|\,n\in \mathbb N\}$ 
contains exactly one idempotent, which is denoted $s\naomeg$.
We put $s^{\omega+1}=s^\omega\cdot s$ which equals (by 
definition) to $s\cdot s^\omega$.
By $S^1$ we mean the monoid $S\cup \{1\}$ with a 
new neutral element $1$ added
when  $S$ is not a monoid and $S^1=S$ otherwise.
We denote $\eval_S\colon S^+\to S$ the evaluation homomorphism from the free semigroup over $S$ defined by the rule $\eval_S(s)=s$ for all $s\in S$. Here, an element $w\in S^+$ is a word $w=s_1s_2\dots s_k$, where $s_i\in S$ for $i\in\{1,\dots,k\}$, and for such $w$ we have $\eval_S(w)=\eval_S(s_1s_2\dots s_k)=s_1\cdot s_2\cdots s_k$.

Furthermore, we use the Green relations, a basic notion in the theory of semigroups (see \cite{howie} by Howie). 
For the reader's convenience we recall that an \textit{ideal} of a semigroup $S$ is a non-empty subset $I\subseteq S$ such that for all $t\in I$ and $s\in S$ we have $t\cdot s\in I$ and $s\cdot t\in I$. 
The ideal generated by an element $s\in S$ is equal to $S^1sS^1=\{x\cdot s\cdot y\,|\, x,y\in S^1\}$. 
Then the Green relation $\gr{J}$ is defined by the rule  
$s\,\gr{J}\,t \Longleftrightarrow S^1sS^1= S^1tS^1$.
We say that a semigroup is \textit{simple} if it has no proper ideal, \textit{i.e.}, if all elements of the semigroup are $\gr{J}$-equivalent. 

The following lemma is well known (see, \textit{e.g.}, \cite[Theorem 1.11]{pin-varieties} by Pin).
\begin{lemma}\label{l:folk}
Let $S$ be a finite semigroup. There exists $n\in\bbN$ such that 
for every sequence of elements $s_1,\dots,s_n\in S$ there are indices $i,j\in\{1,\dots, n\}, i\leq j$ for which the product $s_i\cdot s_{i+1}\cdots s_j$ is an idempotent.
\end{lemma}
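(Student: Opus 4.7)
The plan is to invoke the finite form of Ramsey's theorem applied to a well-chosen coloring of pairs of indices. The decisive observation is that consecutive subproducts compose multiplicatively: setting $\pi(i,j) = s_{i+1}\cdot s_{i+2}\cdots s_j$ for $0\le i<j\le n$, we have $\pi(i,k) = \pi(i,j)\cdot\pi(j,k)$ whenever $0\le i<j<k\le n$. Therefore, if three indices $i<j<k$ exist with $\pi(i,j) = \pi(j,k) = \pi(i,k) = e$ for some $e\in S$, then $e\cdot e = e$, so $e$ is an idempotent, and it is precisely the product $s_{i+1}\cdots s_j$ with $1\le i+1\le j\le n$ that the lemma is asking for.

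The remaining task is to force the existence of such a triple. To this end I would color each unordered pair $\{i,j\}\subseteq\{0,1,\dots,n\}$ by the element $\pi(i,j)\in S$, obtaining a coloring of the $2$-element subsets of an $(n+1)$-set with at most $|S|$ colors. By the finite Ramsey theorem, there is a threshold $N$ (the multicolor Ramsey number guaranteeing a monochromatic triangle in any $|S|$-coloring of the pairs of an $N$-set) such that, whenever $n+1\ge N$, one can always find indices $0\le i<j<k\le n$ whose three mutual pairs receive the same color. Choosing $n\ge N-1$ therefore suffices, and the compatibility relation from the previous paragraph delivers the idempotent.

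The whole argument hinges on a single conceptual observation, namely the compatibility of the coloring $\pi$ with the semigroup operation; once this is in place, no real obstacle remains, since the application of Ramsey's theorem is entirely mechanical. The bound on $n$ produced this way is enormous, but the statement only requires existence, so the crude Ramsey estimate is perfectly adequate.
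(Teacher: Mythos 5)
Your argument is correct and is precisely the classical Ramsey-theorem proof of this folklore fact: the paper itself gives no proof, citing Pin's \emph{Varieties of Formal Languages} (Theorem 1.11), where exactly this coloring $\{i,j\}\mapsto s_{i+1}\cdots s_j$ of pairs from $\{0,1,\dots,n\}$ and a monochromatic triangle are used. The multiplicativity $\pi(i,k)=\pi(i,j)\cdot\pi(j,k)$ and the choice $n\ge N-1$ with $N$ the $|S|$-color Ramsey number for triangles are exactly right, so there is nothing to add.
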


For words $w,x,y,z\in A^*$ such that $w=xyz$, we say that $x$ is a \emph{prefix},
$y$ is a \emph{factor} and $z$ is a \emph{suffix}. 
We say that the prefix $x$ of $w$ is proper if $|x|<|w|$, where $|w|$ is the length of the word $w$.
A language $L\subseteq A^*$ is unavoidable if an arbitrary infinite word $u$ over $A$ has a factor $v$ in $L$. We denote the set of all infinite words over $A$ by $A^\infty$.

A \textit{quasi-order} $\leq$ on a set $X$ is a reflexive and transitive binary relation.
It is called a \textit{well quasi-order (wqo)} if for an infinite sequence $(x_n)_{n\in\bbN}$ of elements of $X$ there exist indices $m,n\in\bbN$ such that $m<n$ and $x_m\leq x_n$. 
Many equivalent defining conditions are known (see, {\it e.g.},~\cite[Theorem 6.1.1]{delucavarri}). 

The next definition is partially motivated by results from \cite{bucher}. We prefer to follow the formalism and notation used in \cite{kunc}.

\begin{definition}
Let $\sigma\colon A^+\to S$ be a homomorphism onto a finite ordered semigroup. We denote $\lesigma$ a quasi-order on $A^*$ defined by setting $u \lesigma v$ if and only if there exist factorizations $u=a_1\dots a_n$ and $v=v_1\dots v_n$, such that for all
$i\in \{1,\dots, n\}$ we have $a_i\in A, v_i\in A^+$ and $\sigma(a_i)\leq\sigma(v_i)$. 
\end{definition}

We refer to the list of 
inequalities $\sigma(a_j)\leq\sigma(v_j)$ for $j\in\{1,\dots,n\}$ as to the \textit{proof of} $u\lesigma v$, and we can use the proof to form other inequalities. We say that
$a_i\dots a_j \lesigma v_i\dots v_j$ 
is the \textit{consequence of the proof given by the factor} $a_i\dots a_j$.
Note that $u\lesigma v$ implies $\sigma(u)\le \sigma(v)$ and either $u=v=\varepsilon$ or $u,v\in A^+$. 
Finally, it is clear that $\lesigma$ is a stable quasi-order on $A^*$.

The main result from~\cite{bucher} has the following natural interpretation. To see it, 
one needs other results~\cite[Section 3]{bucher} concerning rewriting systems of 
the form mentioned in the introduction (called \textit{OS scheme} in~\cite{bucher}). 
If the relation $\xRightarrow{*}_R$ with rules of the form $a\rightarrow u, a\in A, u\in A^+$ is a \wqo, then there is a homomorphism $\sigma : A^+ \rightarrow S$ onto a finite ordered semigroup such that the relations $\xRightarrow{*}_R$ and $\le_\sigma$ coincide. 
And vice versa, if $\le_\sigma$ is a \wqo, then there is a system $R$ with the property 
$\xRightarrow{*}_R\ =\ \le_\sigma$.

\begin{proposition}[\cite{bucher}]\label{p:bucher}
Let $\sigma\colon A^+\to S$ be a homomorphism onto a finite ordered semigroup. Then the following conditions are equivalent:
\begin{enumerate}
\item\label{lesigma} The relation $\lesigma$ is a well quasi-order on $A^*$.
\item\label{twosides} The language $L_{\sigma} = \{awa\,|\,a\in A,w\in A^*, a\lesigma awa\}$ is unavoidable over $A$.
\item\label{oneside} The language $\{aw\,|\,a\in A,w\in A^*, a\lesigma aw\}$ is unavoidable over $A$.
\end{enumerate}
\end{proposition}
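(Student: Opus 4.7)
My plan is to establish all three equivalences by the cycle $(1) \Rightarrow (3) \Rightarrow (2) \Rightarrow (1)$. I would moreover note that $(2) \Rightarrow (3)$ is immediate, since any witness $awa\in L_\sigma$ can be rebracketed as $a\cdot(wa)$ with $wa\in A^+$ and the same inequality $\sigma(a)\le \sigma(awa)$, so $L_\sigma$ is a subset of the language in~(3).

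For $(1) \Rightarrow (3)$, I would argue directly. Suppose $\lesigma$ is a well quasi-order and let $u=b_1b_2\dots$ be any infinite word. Applying the wqo hypothesis to the sequence of prefixes $p_n=b_1\dots b_n$ yields $m<n$ with $p_m\lesigma p_n$. The witnessing factorization provides letters $c_1,\dots,c_m$ (with $c_i=b_i$) and words $v_1,\dots,v_m\in A^+$ satisfying $v_1\cdots v_m=p_n$ and $\sigma(c_i)\le \sigma(v_i)$. Since $|p_n|>|p_m|$, some $v_t$ has length at least $2$; taking the smallest such $t$ forces $v_j=c_j=b_j$ for all $j<t$, so $v_t$ is a factor of $u$ starting at position $t$ with first letter $b_t$. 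The inequality $\sigma(b_t)\le \sigma(v_t)$ then places this factor of $u$ in the language of~(3).

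For $(3) \Rightarrow (2)$, I would argue by contradiction. Assuming some infinite word $u$ avoids $L_\sigma$, condition~(3) supplies infinitely many factors of $u$ of the form $a\cdot w$ with $\sigma(a)\le \sigma(aw)$; by finiteness of $A$, infinitely many share the same starting letter $a$, occurring at positions $p_1<p_2<\dots$. Applying Lemma~\ref{l:folk} to the sequence of semigroup elements associated with the blocks $b_{p_l}\dots b_{p_{l+1}-1}$, I would obtain indices $l\le l'$ for which the product over the blocks between $p_l$ and $p_{l'+1}$ is idempotent; combining this with an inequality $\sigma(a)\le \sigma(F)$ for a sufficiently long factor $F$ starting at $p_l$ and witnessing~(3), and using stability of the order plus idempotence to propagate the inequality across pumped copies, I would exhibit a factor of $u$ starting and ending with $a$ that belongs to $L_\sigma$, contradicting the avoidance. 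The principal obstacle is ensuring that the witnessing factor for~(3) sits inside the idempotent block so that the combined inequality really yields $\sigma(a)\le \sigma(\mathrm{factor})$ rather than a weaker statement such as $\sigma(a)\cdot s\le \sigma(\mathrm{factor})$.

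For $(2) \Rightarrow (1)$, I would use the rewriting interpretation of $\lesigma$. Each occurrence of a factor $awa\in L_\sigma$ inside a larger word allows a \emph{contraction} replacing $awa$ by a single $a$, producing a strictly shorter word lying $\lesigma$-below the original. Iterating contractions reduces any word to a normal form of length less than the unavoidability bound. Given a sequence $(w_k)$, reducing each $w_k$ and applying pigeonhole yields a common normal form $w^*=c_1\dots c_m$ and factorizations $w_{k_j}=x^{(j)}_1\dots x^{(j)}_m$ with $c_i\lesigma x^{(j)}_i$. To conclude $w_{k_j}\lesigma w_{k_{j'}}$ I would then invoke a Nash--Williams minimal bad sequence argument: assuming $(w_k)$ is minimal bad, the coordinate sequences $(x^{(j)}_i)_j$ must themselves be wqo, and Higman-style simultaneous alignment across coordinates produces the required comparison. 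This last step is the real obstacle, since a naive pigeonhole only aligns one coordinate at a time and a careful inductive setup is needed to handle all coordinates simultaneously.
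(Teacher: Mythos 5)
First, a point of comparison: the paper offers no proof of Proposition~\ref{p:bucher} at all --- it is imported from \cite{bucher} with a citation --- so your attempt must stand entirely on its own. Two of your steps do stand: $(1)\Rightarrow(3)$ via the sequence of prefixes is correct (the minimal $t$ with $|v_t|\ge 2$ forces $v_t$ to start at position $t$ of $u$ with first letter $b_t$, and $\sigma(b_t)\le\sigma(v_t)$ places $v_t$ in the language of (3), read with $w\in A^+$ as the paper itself does later), and $(2)\Rightarrow(3)$ is indeed an inclusion of languages. The two steps you yourself flag as ``obstacles'' are, however, genuine gaps, and they are exactly where the content of the proposition lives. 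In $(3)\Rightarrow(2)$, even after you use Lemma~\ref{l:folk} (or Ramsey) to produce an idempotent block $X=FY$ that begins at an occurrence of a witness $F=aw$ with $\sigma(a)\le\sigma(F)$ and is followed by another occurrence of $a$, the only inequality stability gives you is $\sigma(a)\cdot\sigma(Ya)\le\sigma(F)\cdot\sigma(Ya)=\sigma(Xa)$, i.e.\ $\sigma(aYa)\le\sigma(Xa)$ --- precisely the weaker statement $\sigma(a)\cdot s\le\sigma(\mathrm{factor})$ you worry about. Idempotence of $\sigma(X)$ gives no way to cancel the spurious $\sigma(Ya)$ on the left, and arranging that $F$ sit inside the idempotent block does not change this. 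The implication is not established.

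The $(2)\Rightarrow(1)$ sketch fails for a structural reason. Contraction is sound ($yaz\lesigma y(awa)z$ by stability of $\lesigma$), and unavoidability of $L_\sigma$ does give, via K\"onig's lemma, a bound $N$ below which every word can be driven. But the pigeonhole then yields only a common minorant $w^*\lesigma w_{k_j}$ for infinitely many $j$, and a common \emph{lower} bound in a quasi-order gives no comparison $w_{k_j}\lesigma w_{k_{j'}}$ between the terms themselves. The minimal-bad-sequence device does not repair this, because your contractions point the wrong way: replacing $w_k$ by $w_k'$ with $w_k'\lesigma w_k$ and $|w_k'|<|w_k|$, goodness of the contracted sequence yields $w_i'\lesigma w_j'\lesigma w_j$, which does not contradict badness of $(w_k)$ --- Nash--Williams requires the new terms to be related to the old ones in the opposite direction. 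This implication is the actual theorem of \cite{bucher} and needs a substantially more delicate induction; as written, your proposal does not prove it.
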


We say that an ordered semigroup $S$ is \textit{congenial} if for every homomorphism $\sigma\colon A^+\to S$ the corresponding relation $\lesigma$ is a well quasi-order. The class of congenial semigroups is denoted $\pekna$.

We finish this section with a basic observation that 
it is enough to consider the case of the homomorphism 
$\eval_S$ when a congeniality of $S$ is tested.
We establish the following auxiliary lemma with the proof essentially 
same as to the unordered case (see~\cite[Theorem 10,\,(iii)$\Longrightarrow$(i)]{kunc}).

\begin{lemma}
\label{l:vetsi-abeceda}
Let $\sigma \colon A^+ \rightarrow S$ be a homomorphism to an ordered semigroup $S$ such that 
$\le_\sigma$ is a \wqo. Let $B$ be an alphabet, 
$\alpha : B^+ \rightarrow A^+$ be a homomorphism of free semigroups such that 
$\alpha (B)\subseteq A$, and $\varphi=\sigma\circ \alpha$. 
Then the quasi-order $\le_\varphi$ is a \wqo.\dukazjinde
\end{lemma}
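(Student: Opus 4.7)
The plan is to verify the sequential form of \wqo: given an arbitrary infinite sequence $(u_n)_{n=1}^\infty$ in $B^*$, I need to produce indices $m<n$ with $u_m \le_\varphi u_n$. The natural move is to transport the problem to $A^*$ by applying $\alpha$, obtaining the sequence $\bigl(\alpha(u_n)\bigr)_{n=1}^\infty$ in $A^*$. Since $\le_\sigma$ is assumed to be a \wqo\ on $A^*$, there exist indices $m<n$ with $\alpha(u_m) \le_\sigma \alpha(u_n)$, and it remains to lift this inequality to a proof of $u_m \le_\varphi u_n$.

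The key feature that makes such a lifting essentially automatic is the hypothesis $\alpha(B) \subseteq A$, which forces $|\alpha(w)| = |w|$ for every $w \in B^*$. Writing $u_m = b_1 \cdots b_k$ with $b_i \in B$, the factorization of $\alpha(u_m)$ into single letters of $A$ is uniquely forced to be $\alpha(b_1) \cdots \alpha(b_k)$. Consequently, any proof $\alpha(u_m) = a_1 \cdots a_k$, $\alpha(u_n) = v_1 \cdots v_k$, with $\sigma(a_i) \le \sigma(v_i)$, must satisfy $a_i = \alpha(b_i)$. Writing $u_n = b'_1 \cdots b'_\ell$, each block $v_i$ is a concatenation of consecutive images $\alpha(b'_j)$, so the underlying partition of $\{1, \dots, \ell\}$ yields a factorization $u_n = w_1 \cdots w_k$ with each $w_i \in B^+$ and $\alpha(w_i) = v_i$. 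The inequalities $\sigma(a_i) \le \sigma(v_i)$ then read as $\varphi(b_i) \le \varphi(w_i)$ by the definition $\varphi = \sigma \circ \alpha$, which is precisely a proof of $u_m \le_\varphi u_n$. The degenerate case where some $u_n$ equals $\varepsilon$ is harmless, since $\alpha(\varepsilon) = \varepsilon$ and $\varepsilon \le_\varphi \varepsilon$ holds by convention.

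I do not anticipate any real obstacle here: the length-preserving character of $\alpha$ turns the correspondence between proofs of $\alpha(u_m) \le_\sigma \alpha(u_n)$ in $A^*$ and proofs of $u_m \le_\varphi u_n$ in $B^*$ into essentially a bijection, and the rest is routine bookkeeping. The only slight subtlety worth highlighting is the uniqueness of the letter-by-letter factorization of $\alpha(u_m)$, which is what forces the indices on the left-hand side of the proof to be induced from those of $u_m$; this is exactly where the hypothesis $\alpha(B) \subseteq A$ (rather than the more general $\alpha(B) \subseteq A^+$) is used.
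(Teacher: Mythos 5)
Your proposal is correct and follows essentially the same route as the paper's own proof: transport the sequence to $A^*$ via $\alpha$, apply the \wqo\ property of $\le_\sigma$, and lift the resulting proof of $\alpha(u_m)\le_\sigma\alpha(u_n)$ back to a proof of $u_m\le_\varphi u_n$ using the fact that $\alpha$ maps letters to letters. Your extra remarks on the uniqueness of the letter-by-letter factorization and the empty-word case are just slightly more explicit bookkeeping of the same argument.
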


The following statement is a direct consequence of Lemma~\ref{l:vetsi-abeceda}, where we take $A=S$ and $\sigma=\eval_S$.

\begin{lemma}
\label{l:evaluace}
A semigroup $S$ is congenial if and only if
$\leq_{\eval_S}$ is a \wqo.\dukazjinde
\end{lemma}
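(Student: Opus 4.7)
My plan is to prove both directions of the equivalence, with the forward direction following immediately from the definition and the backward direction being a direct application of Lemma~\ref{l:vetsi-abeceda}.

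For the forward direction, if $S$ is congenial then by definition the relation $\lesigma$ is a \wqo\ for every surjective homomorphism $\sigma\colon A^+\to S$. Specializing to the alphabet $A=S$ and the homomorphism $\sigma=\eval_S$, which is surjective because $\eval_S(s)=s$ for every $s\in S$, gives that $\le_{\eval_S}$ is a \wqo.

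For the backward direction, I assume $\le_{\eval_S}$ is a \wqo\ and take an arbitrary surjective homomorphism $\sigma\colon A^+\to S$ for some alphabet $A$. The key step is to factor $\sigma$ through $\eval_S$ by a letter-to-letter map. I define $\alpha\colon A^+\to S^+$ by $\alpha(a)=\sigma(a)$ for each $a\in A$, where $\sigma(a)\in S$ is read as a length-one word over the alphabet $S$, and extend $\alpha$ as a homomorphism of free semigroups. Then $\alpha(A)\subseteq S$, and for every $w=a_1\dots a_n\in A^+$ one checks $\eval_S(\alpha(w))=\sigma(a_1)\cdot\ldots\cdot\sigma(a_n)=\sigma(w)$, so that $\sigma=\eval_S\circ\alpha$. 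Applying Lemma~\ref{l:vetsi-abeceda} with $\le_{\eval_S}$ in the role of the given \wqo\ and with $B=A$, one concludes that $\le_\sigma$ is a \wqo; since $\sigma$ was arbitrary, $S$ is congenial. There is no real technical obstacle here: the substantive work is bundled into Lemma~\ref{l:vetsi-abeceda}, and the point of the present lemma is just the observation that every surjective homomorphism $\sigma\colon A^+\to S$ factors through the evaluation map via the canonical $\alpha$ above.
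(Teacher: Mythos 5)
Your proof is correct and follows essentially the same route as the paper: the forward direction is immediate by specializing to $\eval_S$, and the backward direction applies Lemma~\ref{l:vetsi-abeceda} to a letter-to-letter factorization of $\sigma$ through the evaluation map. The only inessential difference is that the paper first restricts $\eval_S$ to $C^+$ with $C=\sigma(A)$ before invoking the lemma, whereas you factor directly through $\eval_S\colon S^+\to S$, which works just as well since Lemma~\ref{l:vetsi-abeceda} does not require surjectivity of the outer homomorphism.
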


\section{What Makes an Ordered Semigroup Congenial}
\label{s:conditions}

Due to \cite[Lemma 2]{kunc}, we know that a semigroup $S$ ordered by equality is congenial if and only if for  every $s,t\in S$ either $s=(s\cdot t)\naomeg \cdot s \text{ or } t= t\cdot (s\cdot t)\naomeg.$
The natural generalization of this condition for an ordered semigroup $S$ is
\begin{equation}
\label{eqn:(1)}
\forall s,t\in S\colon s\leq (s\cdot t)\naomeg\cdot  s \text{ or } t\leq t\cdot (s\cdot t)\naomeg.
\end{equation}
We show that this condition is indeed necessary. 
Note that a semigroup satisfying~(\ref{eqn:(1)}) also satisfies $s\leq s^{\omega +1}$ as it is just the condition (\ref{eqn:(1)}) with $s=t$. 

\begin{proposition}\label{p:(1)}
Every congenial semigroup satisfies the condition (\ref{eqn:(1)}). \dukazjinde
\end{proposition}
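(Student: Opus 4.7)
The plan is to prove the contrapositive: assuming~(\ref{eqn:(1)}) fails for some pair $s,t\in S$, I will exhibit a homomorphism $\sigma$ whose induced relation $\lesigma$ is not a \wqo, contradicting congeniality of $S$. Concretely, fix $s,t\in S$ with $s\not\le (s\cdot t)\naomeg\cdot s$ and $t\not\le t\cdot (s\cdot t)\naomeg$ (possibly $s=t$), take the two-letter alphabet $A=\{a,b\}$, and let $\sigma\colon A^+\to S$ be the homomorphism determined by $\sigma(a)=s$ and $\sigma(b)=t$. The candidate avoider for the language $L_\sigma$ from Proposition~\ref{p:bucher} is the periodic infinite word $abab\cdots\in A^\infty$.

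The first step I would carry out is a strengthening of the single-exponent hypothesis to all finite exponents: $s\not\le (s\cdot t)^k\cdot s$ and $t\not\le t\cdot (s\cdot t)^k$ for every $k\ge 1$. If on the contrary $s\le (s\cdot t)^k\cdot s$ held for some $k\ge 1$, then stability of $\le$ would let me multiply on the left by $(s\cdot t)^k$ to obtain $(s\cdot t)^k\cdot s\le (s\cdot t)^{2k}\cdot s$, and an easy induction would give $s\le (s\cdot t)^{jk}\cdot s$ for every $j\ge 1$; taking $j$ large enough that $(s\cdot t)^{jk}=(s\cdot t)\naomeg$ then contradicts the standing assumption. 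The argument for $t$ is symmetric.

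The second step is a factor analysis for $abab\cdots$. Since the letters strictly alternate, every factor of the form $xvx$ with $x\in A$ and $v\in A^*$ is either $a(ba)^k$ or $b(ab)^k$ for some $k\ge 1$. Unwrapping the definition of $\lesigma$ with a single letter on the left, the condition $a(ba)^k\in L_\sigma$ is equivalent to $\sigma(a)\le\sigma(a(ba)^k)$, i.e.\ $s\le (s\cdot t)^k\cdot s$, and $b(ab)^k\in L_\sigma$ to $t\le t\cdot (s\cdot t)^k$; both are ruled out by the first step. Hence $abab\cdots$ contains no factor in $L_\sigma$, so $L_\sigma$ is not unavoidable, and Proposition~\ref{p:bucher} then forces $\lesigma$ not to be a \wqo, contradicting congeniality of $S$.

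The only nonroutine ingredient is the iteration step lifting a finite-exponent inequality to the idempotent $\omega$-power; once that is in hand, everything else is a direct unwinding of the definition of $L_\sigma$ and the rigidity of the alternating binary word.
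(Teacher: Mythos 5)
Your proof is correct, but it takes a genuinely different route from the paper's. The paper argues in the direct direction: from congeniality it gets that $\le_\sigma$ is a \wqo\ for a homomorphism that is a bijection on letters, applies the definition of \wqo\ to the sequence $((ab)^i)_{i\ge 1}$ to obtain $(ab)^k\le_\sigma (ab)^\ell$ for some $k<\ell$, and then dissects the proof of that inequality --- locating the first non-trivial step and splitting into cases according to its shape --- using the auxiliary inequality $s\le s^{\omega+1}$ and $\omega$-power manipulations to reach condition~(\ref{eqn:(1)}). You instead prove the contrapositive via the unavoidability criterion of Proposition~\ref{p:bucher}: the rigidity of the alternating word makes the combinatorics trivial, since every factor of $(ab)^\infty$ lying in $L_\sigma$ must be $a(ba)^k$ or $b(ab)^k$, and membership of such a factor in $L_\sigma$ reads off directly as $s\le (s\cdot t)^k\cdot s$ or $t\le t\cdot (s\cdot t)^k$. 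All the work is then concentrated in a single clean algebraic step (lifting a finite exponent to the $\omega$-power), which replaces the paper's case analysis; the price is that you lean on Proposition~\ref{p:bucher}, whereas the paper's argument for this proposition uses only the definition of \wqo.

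Two minor points, neither a gap. First, your $\sigma$ maps onto $\langle s,t\rangle$ rather than onto $S$, so to contradict congeniality of $S$ as stated you should either extend $\sigma$ to an onto homomorphism over a larger alphabet (this changes nothing, as the avoiding word and all its factors stay in $\{a,b\}^+$) or argue via Lemma~\ref{l:vetsi-abeceda} as the paper does. Second, in the lifting step, not every large $j$ satisfies $(s\cdot t)^{jk}=(s\cdot t)^{\omega}$; you should choose $j=N$ for some $N$ with $(s\cdot t)^N=(s\cdot t)^{\omega}$, since then $(s\cdot t)^{Nk}=\left((s\cdot t)^{\omega}\right)^k=(s\cdot t)^{\omega}$ by idempotency. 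Both fixes are cosmetic.
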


Note that for the condition $s\leq s^{\omega+1}$ leaving the realm of 
ordered semigroups we get the equality $s=s^{\omega+1}$ defining 
%
the widely studied class of finite completely regular semigroups.

The following example shows that 
the condition~(\ref{eqn:(1)})
is not a sufficient condition. It indicates that the ordered situation is more complicated.  

\begin{example}\label{ex:potvurka}
Denote $\mathcal{F}_{LRB}(3)$ the free left-regular band (\textit{i.e.}, semigroup satisfying the identities $xyx=xy$ and $x^2=x$) over three generators $a,b,c$. 
The semigroup has 15 elements represented by words listed in Fig.~\ref{fig:potvurka}, where  
the order is depicted. 
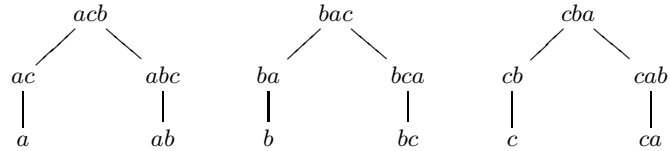
\begin{figure}[ht]
    \centering
\[
\xymatrix@C=3mm@R=4mm{
& acb\ar@{-}[dl] \ar@{-}[dr] &&&& bac \ar@{-}[dl] \ar@{-}[dr] &&&& cba\ar@{-}[dl] \ar@{-}[dr] \\
ac \ar@{-}[d]&& abc \ar@{-}[d] && ba \ar@{-}[d] && bca\ar@{-}[d] && cb\ar@{-}[d] && cab\ar@{-}[d] \\
a && ab && b && bc && c && ca
}
\]
\caption{The order $\leq$ of $\mathcal{F}_{LRB}(3)$.}
\label{fig:potvurka}
\end{figure}
For the product of a pair of elements, 
we simply concatenate the words and then omit the second occurrence of each letter 
if it occurs.
It is a routine to check that $\mathcal{F}_{LRB}(3)$  satisfies the condition (\ref{eqn:(1)}). 

Now we take $\sigma\colon \{a,b,c\}^+\to \mathcal{F}_{LRB}(3)$ where $\sigma(a)=a, \sigma(b)=b, \sigma(c)=c$. For the language $L_\sigma$ given by Proposition~\ref{p:bucher}, we see that $u\in L_\sigma$ if and only if $\sigma(u)\in \{a,ac,acb,b,ba,bac,c,cb,cba\}$.
The periodical infinite word generated by $abc$, that is $(abc)^{\infty}=abcabc\dots$, has no factor in the language $L_{\sigma}$ since $\sigma((abc)^n a)=abc$ (for $n\in\mathbb{N}$) and similarly for factors starting and ending with $b$, resp. $c$. This means that the language $L_{\sigma}$ is avoidable and $\mathcal{F}_{LRB}(3) \notin \pekna$. Therefore, the condition (\ref{eqn:(1)}) is not the characterization of the class of congenial semigroups. 
\end{example}

We also add an example of ordered semigroup which is not completely regular.

\begin{example}\label{ex:brandt}
We consider two ordered versions $B_2^+$ and $B_2^-$ of the Brandt semigroup $B_2$.
The semigroup is generated by two elements $a$ and $b$ satisfying $a^2=0$, $b^2=0$, $aba=a$, and $bab=b$. The semigroup has five elements $a$, $b$, $ab$, $ba$, and~$0$, where $ab,ba,0$ are idempotents.
The orders are given in Fig. \ref{fig:brandt}.
\begin{figure}[th]
    \centering
\[
\xymatrix@C=2mm@R=2mm{
&a \ar@{-}[ddrrr] && ab \ar@{-}[ddr] && ba \ar@{-}[ddl] && b \ar@{-}[ddlll] &&&&&&& 0\\
B_2^+: &&&&&&&&&& B_2^-: &&&&\\
&&&& 0 &&&&&&& b \ar@{-}[uurrr] && ba \ar@{-}[uur] && ab \ar@{-}[uul] && a \ar@{-}[uulll]
}
\]
    \caption{Orders of $B_2^+$ and $B_2^-$.}
    \label{fig:brandt}
\end{figure}
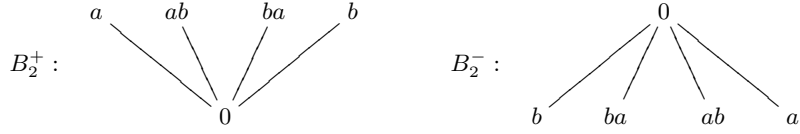
In both cases we consider a homomorphism $\sigma\colon \{a,b\}^+\to B_2,$ where $\sigma(a)=a,\, \sigma(b)=b$.

Firstly, we deal with $B_2^+$. Taking the sequence of words $(a^i)_{i\in\mathbb N}$, we get an infinite antichain with respect to
 $\lesigma$ showing that $\lesigma$ is not a $\wqo$ and thus $B_2^+\notin\pekna$.
For the ordered semigroup $B_2^-$, we see that
$a\leq a^2 =0, b\leq b^2=0$, and $a=aba$, and so
$a^2, b^2, aba \in L_{\sigma}$ for $L_\sigma$ from Proposition \ref{p:bucher}. The language $\{a^2,aba, b^2\}$ is unavoidable, which implies $B_2^-\in\pekna$.
\end{example}

Motivated by the previous examples and basic observations, we show 
the first sufficient condition ensuring the congeniality.

\begin{proposition}\label{prop:silnejis-nez-1} 
Let $S$ be 
a finite ordered semigroup satisfying the inequality $ x \le x \cdot (y\cdot x )^\omega$.
Then $S$ is congenial.
\end{proposition}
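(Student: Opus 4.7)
The plan is to reduce to the criterion of Proposition~\ref{p:bucher} and then exhibit the required factor via a Ramsey argument. By Lemma~\ref{l:evaluace} it is enough to prove that $\leq_{\eval_S}$ is a \wqo\ on $S^*$, and by Proposition~\ref{p:bucher}(\ref{oneside}) this in turn reduces to showing that the language $\{aw \mid a\in S,\ w\in S^*,\ a\leq_{\eval_S} aw\}$ is unavoidable over the alphabet $S$. For a word $aw$ starting with a letter $a$, the inequality $a\leq_{\eval_S} aw$ is witnessed by the trivial factorizations $a=a$ and $aw=aw$ whenever $a\leq \eval_S(aw)$ holds in $S$. Hence the task is to show that every infinite sequence $s_1 s_2 s_3 \cdots \in S^\infty$ contains a factor $aw$ with $a\leq \eval_S(aw)$.

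To produce such a factor, first use pigeonhole to pick a letter $a \in S$ that occurs at infinitely many positions $p_1<p_2<\cdots$ of the sequence, and then apply the infinite Ramsey theorem to pairs $\{k<l\}$ coloured by $\eval_S(s_{p_k+1}\cdots s_{p_l}) \in S$. Because $S$ is finite, one obtains an infinite sub-sequence $p_{i_1} < p_{i_2} < \cdots$ on which the colour is constantly some $f \in S$; multiplicativity of the colouring yields $f=f\cdot f$, so $f$ is idempotent.

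The key observation is that this Ramsey idempotent has the very specific shape $(z\cdot a)\naomeg$. Indeed, choose $l\geq 2$ with $p_{i_l}\geq p_{i_1}+2$ and set $z := \eval_S(s_{p_{i_1}+1}\cdots s_{p_{i_l}-1}) \in S$. Since $s_{p_{i_l}}=a$, one has $f = z \cdot a$, and idempotency $f^2=f$ rewrites as $(za)^2 = za$, so $f=(za)\naomeg$. Applying the hypothesis $x \leq x\cdot(yx)\naomeg$ with $x=a$ and $y=z$ then gives
\[
a \ \leq\ a\cdot (za)\naomeg\ =\ a\cdot f.
\]
Consequently the factor $aw := s_{p_{i_1}} s_{p_{i_1}+1}\cdots s_{p_{i_l}}$ of the infinite sequence satisfies $\eval_S(aw)=a\cdot f \geq a$, as required.

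The main subtlety is the choice of the Ramsey colouring: colouring arbitrary pairs of positions yields a general idempotent $f$ to which the hypothesis cannot be applied directly. Colouring instead over the occurrences of a fixed repeating letter $a$ forces $f$ to end in $a$ and hence to take the exact form $(za)\naomeg$, which is precisely the right-hand-side pattern appearing in $x\leq x(yx)\naomeg$. The degenerate case $p_{i_2}=p_{i_1}+1$, in which $z$ would be empty, is absorbed by choosing $l=3$ so that $p_{i_l}\geq p_{i_1}+2$ is automatic.
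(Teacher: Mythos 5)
Your proof is correct and follows essentially the same route as the paper: reduce via Proposition~\ref{p:bucher} to the unavoidability of the relevant language, pigeonhole a letter $a$ occurring infinitely often, extract an idempotent product between two occurrences of $a$ (so that it has the form $(z\cdot\sigma(a))^{\omega}$), and apply the hypothesis with $x=\sigma(a)$, $y=z$. The only cosmetic differences are that you invoke the infinite Ramsey theorem where the paper's finite Lemma~\ref{l:folk} suffices, and that you work with the one-sided condition~(3) of Proposition~\ref{p:bucher} after first reducing to $\eval_S$, whereas the paper uses the two-sided language $L_\sigma$ for an arbitrary homomorphism directly.
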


\begin{proof} Let $\sigma : A^+ \rightarrow S$ be an arbitrary homomorphism.
We show that the language $L_{\sigma}=\{awa\,|\,a\in A,w\in A^*, \sigma(a)\leq \sigma(awa)\}$ from Proposition \ref{p:bucher} is unavoidable. Let $v$ be an infinite word over the alphabet $A$. 
Since the alphabet is finite, some letter $a\in A$ has infinitely many occurrences in $v$. We consider the factorization $v=w_0aw_1aw_2aw_3 \dots$, where the words $w_i$ do not contain the letter $a$. We take the sequence $s_1=\sigma(w_1a), s_2=\sigma(w_2a), \dots$ and use Lemma \ref{l:folk} to show that there exist indices $i,j$ such that $\sigma(w_ia\dots w_ja)$ is an idempotent. If we denote $x=\sigma(a)$, $y=\sigma (w_ia\dots w_j)$, then we get $y\cdot x= (y\cdot x)^\omega$ and $x\le x \cdot (y\cdot x)^\omega = x\cdot y\cdot x$. 
Therefore $a\lesigma aw_ia\dots w_j a$ and the infinite word $v$ has a factor in~$L_{\sigma}$. \qed 
\end{proof}

Natural examples of ordered semigroups satisfying the assumption in the previous 
statement can be found in Appendix.


\section{Effective Characterization of the Class $\pekna$}
\label{s:automaton}

In order to check the condition in Proposition~\ref{p:bucher}, we introduce some technical notation. Let $A$ be an alphabet, and $w\in A^+$ be a word. Then we write 
$\first w$ for the first letter in $w$, {\it i.e.}, 
the letter $a$ such that $w\in aA^*$. Dually, $\last w$ means the last letter in the word $w$. 
Moreover, we denote the set of all factors and suffixes of a given word in a usual way with an exception that we do not consider letters as factors and suffixes here:   
$$\fact w=\{u\in A^+\setminus A \mid \exists p,q\in A^*, w=puq\}, \quad \text{and}$$
$$\suff w=\{u\in A^+\setminus A \mid \exists p\in A^*, w=pu\}.$$

Now, let $\sigma : A^+ \rightarrow S$ be a homomorphism onto a finite ordered semigroup.
We introduce the main technical notation: for $w\in A^+$ we put
$$\factsigma w=\{ \left( \sigma(u),\first u, \last u\right) \in S\times A\times A \mid u\in \fact w \}, 
\quad \text{and}$$
$$\suffsigma w=\{(\sigma(u),\first u, \last u) \in S\times A\times A \mid u\in \suff w \}.$$
Notice that $\factsigma w =\suffsigma w =\fact w=\suff w=\emptyset$ whenever $w\in A$. 
Furthermore, $\suffsigma w \subseteq S\times A\times \{ \last w\}$ for every word $w$, 
that is a useful property motivating the following definition. A non-empty subset 
$M$ of the set $S\times A\times A $ is called \emph{coherent} if there is a letter 
$a\in A$ such that $M\subseteq S\times A\times \{a\}$; if such a letter exists, we denote it by $\last M$.

Clearly, $w\in A^+$ does not avoid $L_\sigma=\{ava \mid a\in A, v\in A^*, a\le_\sigma ava\}$ if and only if there exist $w',v,w''\in A^*$ such that $w=w'avaw''$ and $\sigma(a)\leq \sigma(ava)$.
The latter condition is equivalent to $(\sigma(ava),a,a)\in \factsigma w$ with $\sigma(a)\leq \sigma(ava)$.
In other words, a word $w\in A^+$ avoids the set 
$L_{\sigma}$
if and only if $\factsigma w$ is disjoint with
the set 
$F=\{(s,a,a) \in S\times A\times A \mid \sigma(a) \le s\}.$ 
Now, we are ready to formulate a direct consequence of Proposition~\ref{p:bucher}. 

\begin{lemma}
Let $\sigma : A^+ \rightarrow S$ be a homomorphism onto a finite ordered semigroup
$S$. Then the relation $\le_\sigma$ is a well quasi-order if and only if the set 
$\{ w\in A^+\setminus A \mid \factsigma w \cap F =\emptyset \}$
is finite. 
\qed
\end{lemma}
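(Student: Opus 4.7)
The plan is to derive the lemma from Proposition~\ref{p:bucher} together with the discussion preceding the statement, which already identifies ``$w$ avoids $L_\sigma$'' with the condition $\factsigma{w}\cap F=\emptyset$. Thus only a small combinatorial step remains: to translate ``$L_\sigma$ is unavoidable'' into ``only finitely many words avoid $L_\sigma$''. By Proposition~\ref{p:bucher}, $\le_\sigma$ is a \wqo\ iff $L_\sigma$ is unavoidable, so this translation gives the claim, once one notes that restricting the indexing set to $A^+\setminus A$ is harmless (the excluded words are letters, and there are only finitely many of them).

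First I would state the equivalence: the set $W=\{w\in A^*\mid w\text{ has no factor in } L_\sigma\}$ is finite if and only if $L_\sigma$ is unavoidable. The ``only if'' direction is immediate: if $L_\sigma$ is avoidable, then some infinite word $u\in A^\infty$ has no factor in $L_\sigma$, and then every prefix of $u$ lies in $W$, so $W$ is infinite. The ``if'' direction is the standard König's lemma argument: assume $W$ is infinite; since $A$ is finite, the set of words in $W$ forms an infinite, finitely branching subtree of $A^*$ (closed under taking prefixes, because every prefix of a word avoiding $L_\sigma$ also avoids $L_\sigma$), so there exists an infinite branch, i.e., an infinite word $u\in A^\infty$ all of whose prefixes lie in $W$; then every factor of $u$ is a factor of some prefix in $W$, and hence no factor of $u$ lies in $L_\sigma$, witnessing that $L_\sigma$ is avoidable.

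Next I would invoke the sentence just before the lemma: a word $w\in A^+$ contains a factor in $L_\sigma$ if and only if there is a triple $(s,a,a)\in\factsigma{w}$ with $\sigma(a)\le s$, i.e., if and only if $\factsigma{w}\cap F\neq\emptyset$. Therefore $W\setminus\{\varepsilon\}=\{w\in A^+\mid \factsigma{w}\cap F=\emptyset\}$. Since $A$ is finite and words of length at most one contribute only finitely many elements, finiteness of $W$ is equivalent to finiteness of $\{w\in A^+\setminus A\mid \factsigma{w}\cap F=\emptyset\}$.

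Combining these observations with Proposition~\ref{p:bucher} yields the lemma. There is no real obstacle; the only point requiring a hint of care is the König's lemma step and the book-keeping of which small words to exclude, both of which are routine.
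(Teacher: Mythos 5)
Your proposal is correct and follows exactly the route the paper intends: the lemma is stated with \qed as a direct consequence of Proposition~\ref{p:bucher} together with the preceding identification of ``$w$ avoids $L_\sigma$'' with $\factsigma{w}\cap F=\emptyset$, and your K\"onig's lemma argument merely spells out the implicit translation between unavoidability of $L_\sigma$ and finiteness of the set of finite words avoiding it. Nothing is missing and nothing differs in substance from the paper.
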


To test whether the considered set is finite, we use that 
every $\factsigma w$ is a subset of $S\times A\times A$, and therefore 
there are only finitely many of them. 
In fact, we compute all possible $\suffsigma w$ disjoint with $F$
instead of computing all $\factsigma w$.
It is enough as $\factsigma w$ is a union of all $\suffsigma u$ 
where $u$ is a prefix of $w$. 
Naturally, we compute sets $\suffsigma u$
recursively, since $\suffsigma{wa}$ can be determined by 
$\suffsigma{w}$ in the following way.  
Informally speaking, we add $a$ at the end  of all elements of $\suffsigma{w}$ 
and evaluate the suffix $\last{\suffsigma w}a$
of $wa$ of length two. 
Therefore, we see the sets $\suffsigma{w}$ as states of the following finite 
deterministic incomplete automaton $\mathcal{A}_{\sigma}$
over the alphabet $A$. Notice that the automaton does not have final states.
 
We put $\mathcal{A}_{\sigma}=(Q,A,\delta,\iota)$ where 
$Q=\{\iota\}\uplus \bar{A}\uplus\mathcal{P}$, 
$\bar{A}=\{\bar{a}\,|\,a\in A\}$, and $\mathcal{P}=\{M\subseteq S\times A\times A \,|\, M\neq \emptyset, M\cap F=\emptyset, M\ \text{coherent} \}$. For a given set $M\in\mathcal{P}$ and a letter $a\in A$ we define
$$ M\ast a = \{(s\cdot\sigma(a),b,a)\,|\,(s,b,c)\in M\}\cup \{(\sigma(\ell(M)a),\ell(M),a)\}.$$
Similarly, for $\bar{b}\in \bar{A}$ we put $\bar{b}\ast a= \{(\sigma(ba),b,a))\}$. 
Furthermore, we define the partial transition function $\delta : Q\times A \rightarrow Q$ 
by $\delta(\iota,a)=\bar{a}$ for the initial state $\iota$, and for $q\in Q\setminus\{\iota\}$ we put $\delta(q,a)=q\ast a$ if $q\ast a\in \mathcal P$.
Note that the condition $q\ast a\in \mathcal{P}$ is equivalent to $q\ast a\cap F=\emptyset$ since $q\ast a$ is always non-empty and coherent. In particular, 
we have $\ell(q\ast a)=a$.
As usual, the partial function $\delta$ can be extended to the partial function $\delta : Q\times A^+ \rightarrow Q$, which is denoted by $\delta$ too. 

The following lemma summarises the properties of the previous constructions, with an obvious proof by an induction with respect to the length of words.

\begin{lemma}
\label{l:automat}
Let $\sigma : A^+ \rightarrow S$ be a homomorphism onto a finite ordered semigroup
$S$, and $\mathcal A_\sigma$ be the automaton defined as above. 
For every word $w\in A^+\setminus A$, the state $\delta(\iota,w)$ is 
defined in $\mathcal A_\sigma$ if and only if $\factsigma w \cap F =\emptyset$. 
Moreover, if  $\delta(\iota,w)$ is defined, 
then $\delta(\iota,w)=\suffsigma w$. 
\qed
\end{lemma}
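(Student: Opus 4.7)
The plan is to prove both claims simultaneously by induction on $|w|$, for $w \in A^+ \setminus A$ (so $|w|\ge 2$).

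For the base case $|w|=2$, write $w = ab$ with $a,b \in A$. Then $\fact{ab} = \suff{ab} = \{ab\}$, hence $\factsigma{ab} = \suffsigma{ab} = \{(\sigma(ab), a, b)\}$. On the automaton side, $\delta(\iota, ab) = \bar{a}\ast b = \{(\sigma(ab), a, b)\}$. Both claims---definedness of $\delta(\iota, ab)$ and emptiness of $\factsigma{ab} \cap F$---therefore reduce to the single question whether this one triple lies in $F$, and when they do hold, the two sets coincide.

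For the inductive step, fix $w \in A^+\setminus A$ and a letter $a \in A$. The key decomposition is
\[
\fact{wa} \;=\; \fact{w}\,\cup\,\suff{wa},
\]
obtained by asking whether a length-at-least-two factor of $wa$ uses the final position or not; consequently $\factsigma{wa} = \factsigma{w}\cup\suffsigma{wa}$. If $\factsigma{w}\cap F\ne\emptyset$, the induction hypothesis says $\delta(\iota,w)$ is undefined, so $\delta(\iota,wa)$ is undefined as well, matching $\factsigma{wa}\cap F\ne\emptyset$. Otherwise induction gives $\delta(\iota,w) = \suffsigma{w}$; a short calculation---writing each suffix of $wa$ of length at least two as $ua$ for some nonempty suffix $u$ of $w$ and splitting on $|u|=1$ versus $|u|\ge 2$, while using $\last{\suffsigma{w}} = \last{w}$---yields
\[
\suffsigma{wa} \;=\; \{(s\cdot\sigma(a),\,b,\,a) \mid (s,b,c)\in\suffsigma{w}\}\,\cup\,\{(\sigma(\last{w}a),\,\last{w},\,a)\} \;=\; \suffsigma{w}\ast a.
\]
Hence $\delta(\iota,wa)$ is defined iff $\suffsigma{w}\ast a$ avoids $F$ iff $\factsigma{wa}\cap F=\emptyset$, and in that case $\delta(\iota,wa) = \suffsigma{w}\ast a = \suffsigma{wa}$.

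I do not expect any substantial obstacle; the argument is essentially bookkeeping. The only care needed is in matching the two pieces of the formula for $M\ast a$ with the convention that $\fact{\cdot}$ and $\suff{\cdot}$ ignore single letters: the ``old'' part $\{(s\cdot\sigma(a),b,a)\}$ covers the suffixes of $w$ of length $\ge 2$ already recorded in $\suffsigma{w}$, while the added triple $(\sigma(\last{w}a),\last{w},a)$ accounts precisely for the new length-two suffix $\last{w}a$ of $wa$, which has no counterpart in $w$.
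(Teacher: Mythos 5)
Your proof is correct and takes the same route the paper intends: the authors state the lemma with the remark that it has ``an obvious proof by an induction with respect to the length of words'' and give no further details, and your induction (base case $|w|=2$ via $\bar a\ast b$, inductive step via $\fact{wa}=\fact{w}\cup\suff{wa}$ and $\suffsigma{wa}=\suffsigma{w}\ast a$) fills in exactly that argument. The bookkeeping, including the role of the extra triple $(\sigma(\last{w}a),\last{w},a)$ for the new length-two suffix, is handled correctly.
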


Now, we are ready to state the main result. The proof is straightforward consequence of Propositon~\ref{p:bucher} and the constructions and lemmas in this section. 

\begin{theorem}
\label{th:OKautomat}
Let $\sigma : A^+ \rightarrow S$ be a homomorphism onto a finite ordered semigroup
$S$.  Then $\le_\sigma$ is a \wqo\ if and only if the automaton 
$\mathcal{A}_{\sigma}$ does not contain 
an infinite path starting in the initial state $\iota$. \dukazjinde
\end{theorem}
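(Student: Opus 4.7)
The plan is to chain together the three preceding results---Proposition~\ref{p:bucher}, the lemma displayed immediately before Lemma~\ref{l:automat}, and Lemma~\ref{l:automat} itself---and then invoke the standard observation that a finite directed graph admits arbitrarily long walks from a fixed vertex if and only if it admits an infinite walk from that vertex.

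Concretely, the first step rewrites the condition ``$\lesigma$ is a \wqo'' as ``the set $\mathcal{L} = \{w \in A^+ \setminus A \mid \factsigma{w} \cap F = \emptyset\}$ is finite''. Lemma~\ref{l:automat} then identifies $\mathcal{L}$ with the set of words of length at least two on which $\delta(\iota,\cdot)$ is defined, i.e., with the set of labels of paths of length $\ge 2$ starting at $\iota$ in $\mathcal{A}_\sigma$. Up to finitely many words of length one (whose transition lands in the auxiliary part $\bar{A}$ of the state set), this agrees with the full set of path-labels issuing from $\iota$, so $\mathcal{L}$ is finite if and only if only finitely many words label paths from $\iota$ in $\mathcal{A}_\sigma$.

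For the final step, if an infinite path from $\iota$ exists then its infinitely many prefixes yield infinitely many distinct path-labels, so $\mathcal{L}$ is infinite. Conversely, if path-labels of unbounded length are present, the pigeonhole principle applied to the finite state set $Q$ produces a state $q$ reachable from $\iota$ that lies on a non-trivial cycle; iterating this cycle then supplies an infinite path from $\iota$. The main bookkeeping obstacle, though a minor one, is the precise matching between $\mathcal{L}$ and the set of path-labels, due to the special role of the intermediate states $\bar{a} \in \bar{A}$ reached after a single letter; this discrepancy is confined to finitely many words and does not affect the equivalence between finiteness of $\mathcal{L}$ and absence of an infinite path from $\iota$.
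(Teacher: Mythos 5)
Your proposal is correct and takes essentially the same approach as the paper: both reduce the statement to a chaining of Proposition~\ref{p:bucher} with Lemma~\ref{l:automat}. The only cosmetic difference is that the paper passes directly through infinite words avoiding $L_\sigma$ (whose finite prefixes trace out the infinite path in the deterministic automaton, and conversely), whereas you pass through finiteness of the set $\{w\in A^+\setminus A \mid \factsigma{w}\cap F=\emptyset\}$ and recover the infinite path via a pigeonhole/cycle argument; both routes are valid, and your bookkeeping for the length-one words landing in $\bar{A}$ is a correct, harmless extra detail.
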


The reader may look into Appendix for an illustrative example of the construction of $\mathcal{A}_\sigma$.
The purpose of Theorem~\ref{th:OKautomat} is the following statement.

\begin{corollary} 
\label{cor:decidability}
Let $\sigma : A^+ \rightarrow S$ be a homomorphism onto a finite ordered semigroup
$S$.  Then it is decidable whether $\le_\sigma$ is a well quasi-order. 
\qed
\end{corollary}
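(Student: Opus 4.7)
My plan is to derive the corollary as a straightforward algorithmic consequence of Theorem~\ref{th:OKautomat}. That theorem reduces the question of whether $\le_\sigma$ is a well quasi-order to a purely combinatorial property of the finite automaton $\mathcal{A}_\sigma$, namely the absence of an infinite path from the initial state $\iota$. So all that remains is to argue that (i) $\mathcal{A}_\sigma$ can be effectively constructed from the data $\sigma$ and $S$, and (ii) the existence of an infinite path in a finite graph is decidable.

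For (i), I would observe that the state set $Q=\{\iota\}\uplus \bar A\uplus \mathcal{P}$ is finite and explicitly enumerable: $\bar A$ is in bijection with the (finite) alphabet $A$, and $\mathcal{P}$ is a subset of the (finite) power set of the finite set $S\times A\times A$, whose defining conditions (non-empty, coherent, disjoint from $F=\{(s,a,a)\mid \sigma(a)\le s\}$) are effectively checkable because $\sigma$ and the order on $S$ are given. The transitions $\delta(q,a)=q\ast a$ (when defined) are computed by a bounded number of semigroup multiplications and comparisons in $S$, all of which are effective operations on the finite ordered semigroup. Thus $\mathcal{A}_\sigma$ is a concretely computable finite object.

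For (ii), I would invoke the standard fact that in a finite directed graph, there is an infinite path from a vertex $\iota$ if and only if there is a cycle reachable from $\iota$. This reduces to two elementary computations on the transition graph of $\mathcal{A}_\sigma$: determine the set of states reachable from $\iota$ (e.g., by breadth-first search) and test whether the induced subgraph contains a cycle (e.g., by depth-first search or by iteratively deleting sinks and checking whether the graph becomes empty). Combining (i) and (ii), we obtain an effective procedure that, given $\sigma\colon A^+\to S$, decides whether $\mathcal{A}_\sigma$ admits an infinite path from $\iota$, and hence, by Theorem~\ref{th:OKautomat}, whether $\le_\sigma$ is a well quasi-order.

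No step here looks like a serious obstacle, as Theorem~\ref{th:OKautomat} does all the conceptual work; the only thing to be careful about is to note explicitly that the size of $\mathcal{A}_\sigma$ is bounded (by $1+|A|+2^{|S|\cdot|A|^2}$) so the construction indeed terminates. This bound is exponential in the input, but the statement only claims decidability, not any complexity bound, so no further argument is required.
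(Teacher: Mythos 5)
Your proposal is correct and follows exactly the route the paper intends: the corollary is stated with an immediate \qed because it is a direct consequence of Theorem~\ref{th:OKautomat} together with the effectiveness of constructing the finite automaton $\mathcal{A}_\sigma$ and the standard reduction of ``infinite path from $\iota$'' to ``cycle reachable from $\iota$'', which the paper itself spells out in the discussion following the corollary. The only cosmetic difference is your state-count bound $1+|A|+2^{|S|\cdot|A|^2}$, which is slightly cruder than the paper's $|A|\cdot 2^{|S|\cdot|A|}+|A|+1$ (obtained by exploiting coherence), but this does not affect the argument.
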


Recall that all states in $\mathcal P$ are coherent subsets of $S\times A\times A$.
Since the automaton $\mathcal{A}_\sigma$ is finite, the existence of an infinite path starting in the initial state $\iota$ is equivalent to the existence of a loop reachable from $\iota$.
If we assume that there is a loop labeled by $u$ and reachable by $v$, then we have 
that $vu^\infty=vuuu\dots$ avoids $L_{\sigma}$. Hence the periodical infinite word $u^\infty$ avoids $L_{\sigma}$ too. 

\begin{corollary} 
\label{cor:periodic-word}
Let $\sigma : A^+ \rightarrow S$ be a homomorphism onto a finite ordered semigroup
$S$.  Then there is an infinite word avoiding $L_\sigma$ if and only if there is a periodic infinite word $u^\infty$ with that property. 
\qed
\end{corollary}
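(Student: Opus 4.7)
The plan is to invoke the finite automaton $\mathcal{A}_\sigma$ from Theorem~\ref{th:OKautomat} and extract a loop by pigeonhole. The direction ``$\Leftarrow$'' is immediate, since a periodic infinite word is in particular an infinite word. Only ``$\Rightarrow$'' requires work.

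For the nontrivial direction, I would start with an infinite word $w\in A^\infty$ that avoids $L_\sigma$. By Lemma~\ref{l:automat}, for every finite prefix $p$ of $w$ we have $\factsigma p\cap F=\emptyset$, so $\delta(\iota,p)$ is defined; in other words, reading $w$ traces an infinite path from $\iota$ in $\mathcal{A}_\sigma$. Since the state set $Q$ of $\mathcal{A}_\sigma$ is finite, some state $q$ occurs at two distinct positions along this path: there exist a finite prefix $v$ of $w$ and a nonempty word $u\in A^+$ such that $\delta(\iota,v)=q$ and $\delta(q,u)=q$.

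Iterating the loop gives $\delta(q,u^n)=q$ for every $n\in\bbN$, hence $\delta(\iota,vu^n)$ is defined for all $n$. Applying Lemma~\ref{l:automat} in the other direction, every prefix of the infinite word $vu^\infty$ has $\factsigma{\cdot}\cap F=\emptyset$, which means that $vu^\infty$ avoids $L_\sigma$.

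It remains to pass from $vu^\infty$ to the purely periodic word $u^\infty$. Here the key (and only) observation is that avoidance of a factor language is preserved when we drop a finite prefix: the infinite word $u^\infty$ is a suffix of $vu^\infty$, so every finite factor of $u^\infty$ is also a factor of $vu^\infty$, and therefore $u^\infty$ avoids $L_\sigma$ as well. This is the one step that is not already spelled out in the paragraph preceding the statement, and it is the only place where a small argument is needed; the rest is a direct translation of the ``infinite path $\Longleftrightarrow$ reachable loop'' dichotomy in the finite automaton $\mathcal{A}_\sigma$.
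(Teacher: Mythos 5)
Your proof is correct and follows essentially the same route as the paper: both use the finiteness of $\mathcal{A}_\sigma$ to turn the infinite path guaranteed by Lemma~\ref{l:automat} into a reachable loop labeled $u$, conclude that $vu^\infty$ avoids $L_\sigma$, and then drop the finite prefix $v$ since every factor of $u^\infty$ is a factor of $vu^\infty$.
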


The number of states of the automaton $\mathcal A_\sigma$ is bounded by 
$|A|\times 2^{|S|\times |A|}+|A|+1$, that gives the obvious exponential bound for the time complexity of the algorithm based on Theorem~\ref{th:OKautomat}.

One may modify the construction of $\mathcal A_\sigma$ if the condition \eqref{oneside} from Proposition~\ref{p:bucher} replaces the condition \eqref{twosides}.
This means that Corollary~\ref{cor:periodic-word} holds if we take the set $\{aw \,|\, a\in A, w\in A^+, \sigma(a)\le \sigma(aw)\}$  instead of the set $L_\sigma$.


\section{Other Necessary and Sufficient Conditions}
\label{sec:kombinatorika}

The motivation for this section is to examine whether the condition that $\le_\sigma$ is a \wqo\
depends on the homomorphism $\sigma$ or it is just a property of the semigroup. Therefore, we try to prove necessary conditions from Section~\ref{s:conditions} under the assumption that $\le_\sigma$ is a \wqo.

\begin{proposition}
\label{p:quasiCR}
Let $\sigma : A^+ \rightarrow S$ be a homomorphism onto a finite ordered semigroup
$S$ such that $\le_\sigma$ is a \wqo. Then for every $u\in A^+$ there exists an integer $p>1$ such that $u\le_\sigma u^p$.  
\end{proposition}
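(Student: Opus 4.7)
The plan is to apply the \wqo\ hypothesis to the infinite sequence $u,u^2,u^3,\ldots$ in $A^+$, which produces indices $1\le i<j$ with $u^i\lesigma u^j$ together with a witnessing factorization $u^i=c_1c_2\cdots c_{in}$ (one letter per position) and $u^j=v_1v_2\cdots v_{in}$ satisfying $\sigma(c_\ell)\le\sigma(v_\ell)$ for every $\ell$. Writing $L(\ell)=|v_1\cdots v_\ell|$ and $u=a_1\cdots a_n$, I will then group the $in$ pieces into $i$ consecutive blocks of $n$ pieces and set $y_k=v_{(k-1)n+1}\cdots v_{kn}$; each $y_k$ is a factor of $u^j$, and since the letters in the $k$-th block of $u^i$ are exactly $a_1,\ldots,a_n$, we have $u\lesigma y_k$ for every $k\in\{1,\ldots,i\}$, while $y_1 y_2\cdots y_i=u^j$.

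In the \emph{aligned} case---when every boundary $L(kn)$ is a multiple of $n$---each $y_k$ coincides with some power $u^{p_k}$ of $u$. Since $\sum_{k=1}^{i}p_k=j>i$, at least one exponent satisfies $p_k\ge 2$, and for that index we immediately obtain $u\lesigma u^{p_k}$ with $p_k>1$, which is the desired conclusion.

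The hard part will be the \emph{unaligned} case, in which some $L(kn)$ is not a multiple of $n$ and so the first block provides only a relation of the shape $u\lesigma u^{p}\cdot(a_1\cdots a_r)$ for some $0<r<n$. Here my plan is threefold. First, I will use the stability of $\lesigma$ to right-multiply by $a_{r+1}\cdots a_n$, producing $u\cdot(a_{r+1}\cdots a_n)\lesigma u^{p+1}$, to which the first-$n$-pieces analysis can then be re-applied. Second, I will exploit the full \wqo\ property by extracting an infinite increasing chain $u^{k_1}\lesigma u^{k_2}\lesigma\cdots$ and applying an infinite pigeonhole to the residues $L(n)\bmod n$ across the resulting factorizations. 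Third, I will combine this with the finiteness of the monogenic subsemigroup $\{\sigma(u)^k:k\ge 1\}$ of $S$ to stabilize the relevant $\sigma$-images and iteratively force alignment. The central obstacle throughout is making the first $n$ pieces line up with the natural $u$-block boundaries of $u^j$, that is, reducing the unaligned case to the aligned one.
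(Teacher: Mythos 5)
Your aligned case is fine, but the unaligned case---which is the entire difficulty of the statement---is left as a plan rather than a proof, and none of the three ideas you sketch is carried out or obviously sufficient. Idea (1) produces $u\,(a_{r+1}\cdots a_n)\lesigma u^{p+1}$, whose left-hand side is no longer a power of $u$, so it is not clear what ``re-applying the analysis'' yields. Idea (2) founders on the fact that boundary residues do not compose: if $u^{k_1}\lesigma u^{k_2}\lesigma u^{k_3}$ with first-block boundaries $L_1$ and $L_2$, the boundary of the composed proof is the image of the first $L_1$ letters of $u^{k_2}$ under the second proof, which bears no controlled relation to $L_1+L_2$ or to $L_1,L_2$ modulo $n$, so pigeonholing on residues along the chain does not stabilize anything. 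Idea (3) is the most problematic: the finiteness of $\{\sigma(u)^k : k\ge 1\}$ only controls the order $\le$ on $S$, and $\sigma(u)\le\sigma(u^p)$ does \emph{not} imply $u\lesigma u^p$ (the whole point of $\lesigma$ is the letter-by-letter factorization), so stabilizing $\sigma$-images cannot by itself ``force alignment.''

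The paper's proof shows what is actually needed to close this gap, and it is an ingredient entirely absent from your plan: induction on the length of $u$. From $u^k\lesigma u^\ell$ one extracts the first non-trivial consequence $u\lesigma u^m v$ with $v$ a proper prefix of $u$ (your unaligned case), then defines a transitive relation $\rightarrow$ on the finite set of prefixes of $u$ by $v\rightarrow w$ iff $v\lesigma u^j w$ non-trivially. Since the prefix set is finite, starting from $u\rightarrow v$ one either reaches $\varepsilon$ (done), reaches a terminal prefix $v\neq\varepsilon$, or finds a self-loop $v\rightarrow v$. The self-loop case is easy ($v\lesigma u^j v$ right-multiplied by the complementary suffix $\bar v$ gives $u\lesigma u^{j+1}$), but the terminal case requires applying the induction hypothesis to $\bar v$, which is strictly shorter than $u$ because $v\neq\varepsilon$, to get $\bar v\lesigma\bar v^{\,p}$ and then chain inequalities back to $u\lesigma u^{(p-1)j+1}$. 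Without an induction of this kind (or some substitute source of the inequality $\bar v\lesigma\bar v^{\,p}$ for the complementary suffix), your reduction of the unaligned case to the aligned one has no engine to run on; as it stands the proposal proves the proposition only when some witnessing factorization happens to respect the $u$-block boundaries.
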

\begin{proof}
We show the statement by induction with respect to the length of the word $u$. 
For every $a\in A$ the definition of \wqo\ implies that $a^k\le_\sigma a^\ell$ for some integers $k<\ell$, and by the definition of $\le_\sigma$ we have $a\le_\sigma a^p$ for some $p>1$.

Assume that the statement is true for all words shorter
than a given word $u\in A^+\setminus A$. 
Similarly to the initial step, we have 
$u^k\le_\sigma u^\ell$ for some $k<\ell$. 
We consider the consequences of the proof of $u^k\le_\sigma u^\ell$ given by factors
$u$ of $u^k$. The first non-trivial inequality among these consequences 
in the order from left to right is of the form $u\le_\sigma u^m v$ where $v$ is a proper prefix of $u$. 
Notice that for $v=\varepsilon$ we are done. 

For the considered prefix $v$ of $u$ we may also have some inequality of the form $v\le_\sigma u^j w$ with $j\in \mathbb N$, in particular 
the proof of the inequality $u^k\le_\sigma u^\ell$ has such a consequence.
We analyze the inequalities of that form for all prefixes of $u$.      

On the set $P=\{ v \in A^* \mid v \text{ is prefix of } u\}$ we define the relation $\rightarrow$ as follows: $v\rightarrow w$ if there is $j \ge 0$ such that $v\le_\sigma u^j w$ and $|v|<|u^jw|$. 
(Notice that if $w=v$, then $j>0$.)
Since $\le_\sigma$ is a stable quasi-order
the relation $\rightarrow$ is transitive.   
As is discussed above, there is $v\in P$, $v\neq u$ 
such that $u\rightarrow v$ and so at least one of the following cases occurs. 

Case I: $u\rightarrow \varepsilon$. This means $u \le_\sigma u^j$ with $j>1$ and we are done.

Case II: there is $v\in P$ such that $u\rightarrow v\not = \varepsilon$, and there is no $w$ such that $v\rightarrow w$. 
In particular, $v\not =u$. Let $\bar{v}$ be the suffix of $u$ such that $v\bar{v}=u$. Since $v\not= \varepsilon$, we have $|\bar{v}| <|u|$ and by the induction assumption there is $p$ such that $\bar{v}\le_\sigma \bar{v}^p$. 
If we consider the proof of $u \le_\sigma u^j v$, then the consequence given by the prefix $v$ is
trivial equality (by the assumption that
there is no inequality of the form $v\le_\sigma u^j w$ with
$|v|< |u^j w|$). Then the consequence of the proof given by the suffix $\bar v$ is in the form
$\bar{v}\le_\sigma \bar{v} u^{j-1} v$. Now we use this inequality $(p-1)$-times to get  
$\bar{v}^{p-1} \le_\sigma (\bar{v} u^{j-1} v)^{p-1}=\bar{v} u^{(p-1)j-1} v$. Then we multiply 
the former inequality by $\bar{v}$ on the right and we get $\bar{v}^p \le_\sigma \bar{v} u^{(p-1)j}$. Since we assumed 
$\bar{v}\le_\sigma \bar{v}^p$, we also get $\bar{v} \le_\sigma \bar{v} u^{(p-1)j}$.
Finally, we multiply by $v$ on the left and obtain $u\le_\sigma u^{(p-1)j+1}$.

Case III: there is $v\in P$ such that $u\rightarrow v\not=\varepsilon$, and $v\rightarrow v$.
This means that there is $j>0$ such that $v\le_\sigma u^j v$. Now, it is enough to multiply the former inequality by the suffix $\bar{v}$ of $u$ on the right, and we get $u\le_\sigma u^{j+1}$.
\qed
\end{proof}

We try to show that an ordered semigroup 
$S$ is congenial whenever we have 
an onto homomorphism $\sigma : A^+ \rightarrow S$ determining \wqo\ $\le_\sigma$.
This means that for every homomorphism $\varphi :B^+ \rightarrow S$,
the set $L_\varphi=\{bwb \mid b\in B, w\in B^*, b \le_\varphi bwb  \}$ has to be unavoidable. 
Hence, every periodic infinite word 
$w^\infty$ must contain a factor from $L_\varphi$. In particular,
if $B$ contains $n$ letters $b_1, b_2, \dots, b_n$, then,
for the word $w=b_1b_2 \dots b_n$, there is an index $i\in\{1,\dots ,n\}$ 
and an integer $p\in\mathbb N$
such that $\varphi (b_i) \le \varphi (b_i (b_{i+1} \dots b_i)^p)$.  
Since the homomorphism $\sigma$ is onto, we may consider 
words $w_j\in A^+$ such that $\sigma(w_j)=\varphi(b_j)$. In this setting, 
we want to show that
$\sigma (w_i) \le \sigma (w_i (w_{i+1} \dots w_i)^p)$. 
In fact, we aim on the stronger inequality
$w_i \le_\sigma w_i (w_{i+1} \dots w_i)^p$. Proposition~\ref{p:quasiCR} is a special case of this property for $n=1$.
The following statement fulfills the sketched program.
 

\begin{theorem}
\label{th:ekvivalentni-podminky}
Let $S$ be a finite ordered semigroup.
Then the following conditions are equivalent:
\begin{itemize}
\item[(i)] There exists an alphabet $A$ and an onto homomorphism $\sigma :A^+ \rightarrow S$ such that $\le_\sigma$ is a well quasi-order.
\item[(ii)] There exists an alphabet $A$ and an onto  homomorphism $\sigma :A^+ \rightarrow S$ such that,
for every $n\in\mathbb N$ and $a_1, \dots , a_n\in A$, there exists 
$i\in\{1,\dots ,n\}$ and $p\in\mathbb N$ such that $a_i \le_\sigma a_i (a_{i+1}\dots a_n a_1\dots a_i)^p$.  
\item[(iii)] There exists an alphabet $A$ and an onto homomorphism $\sigma :A^+ \rightarrow S$ such that,
for every $n\in\mathbb N$ and $u_1, \dots , u_n \in A^+$, there exists $i\in\{1,\dots ,n\}$ and $p\in\mathbb N$ such that $u_i \le_\sigma u_i (u_{i+1}\dots u_n u_1\dots u_i)^p$.  
\item[(iv)] There exists an alphabet $A$ and an onto homomorphism $\sigma :A^+ \rightarrow S$ such that, for every $n\in\mathbb N$ and $u_1, \dots , u_n\in A^+$, there exists $i\in\{1,\dots ,n\}$ and $p\in\mathbb N$ such that $\sigma(u_i) \le \sigma \left( u_i (u_{i+1}\dots u_n u_1\dots u_i)^p\right)$.  
\item[(v)] For every $n\in\mathbb N$ and $s_1, \dots , s_n\in S$, there exists $i\in\{1,\dots ,n\}$
and $p\in\mathbb N$
such that $s_i \le s_i\cdot(s_{i+1}\cdots s_n\cdot s_1\cdots s_i)^p$. 
\item[(vi)] For every alphabet $B$ and a homomorphism $\varphi :B^+ \rightarrow S$ the relation 
$\le_\varphi$ is a well quasi-order.
\end{itemize}
\end{theorem}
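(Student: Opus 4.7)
I plan to establish the cycle $(\mathrm{vi})\Rightarrow(\mathrm{i})\Rightarrow(\mathrm{iii})\Rightarrow(\mathrm{ii})\Rightarrow(\mathrm{iv})\Rightarrow(\mathrm{v})\Rightarrow(\mathrm{vi})$. The arrows $(\mathrm{vi})\Rightarrow(\mathrm{i})$ (use $\sigma=\eval_S$ and Lemma~\ref{l:evaluace}), $(\mathrm{iii})\Rightarrow(\mathrm{ii})$ (specialize to $u_j=a_j$), and $(\mathrm{iv})\Rightarrow(\mathrm{v})$ (lift each $s_j\in S$ to some $u_j\in A^+$ with $\sigma(u_j)=s_j$ via surjectivity) are immediate. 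For $(\mathrm{ii})\Rightarrow(\mathrm{iv})$ I would concatenate the letters of $u_1,\ldots,u_n$ into one cyclic list of $\sum_j|u_j|$ letters, apply $(\mathrm{ii})$ to obtain a letter-level inequality $a_{j,r}\le_\sigma a_{j,r}(t E h a_{j,r})^p$ at some position $r$ internal to some $u_j=h\cdot a_{j,r}\cdot t$ (with $E=u_{j+1}\cdots u_n u_1\cdots u_{j-1}$), take $\sigma$-images, and left/right-multiply by $\sigma(h)$ and $\sigma(t)$; stability plus the identity $h a_{j,r}(t E h a_{j,r})^p t = u_j(E u_j)^p$ rearranges this into $\sigma(u_j)\le\sigma(u_j(u_{j+1}\cdots u_n u_1\cdots u_j)^p)$, i.e., $(\mathrm{iv})$. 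For $(\mathrm{v})\Rightarrow(\mathrm{vi})$ I would combine Proposition~\ref{p:bucher} with Corollary~\ref{cor:periodic-word}: for any $\varphi\colon B^+\to S$, it suffices to show that every periodic word $u^\infty$ (with $u=b_1\cdots b_m$) contains a factor in $L_\varphi$, and applying $(\mathrm{v})$ to $\varphi(b_1),\ldots,\varphi(b_m)$ furnishes exactly such a factor $b_i(b_{i+1}\cdots b_m b_1\cdots b_i)^p$, using that a single-letter left side makes $\le_\varphi$ agree with the $\varphi$-image order.

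The substantive step is $(\mathrm{i})\Rightarrow(\mathrm{iii})$, which I would prove by induction on $N=\sum_{j=1}^n(|u_j|-1)\ge 0$. The base case $N=0$ (all $u_j$ are letters $a_j$, possibly with repetitions) uses Lemma~\ref{l:vetsi-abeceda} with a fresh alphabet $B=\{b_1,\ldots,b_n\}$ of \emph{distinct} letters and $\alpha\colon B^+\to A^+$ sending $b_j\mapsto a_j$: since $\alpha(B)\subseteq A$, the lemma makes $\le_\tau$ a \wqo\ for $\tau=\sigma\circ\alpha$, so by Proposition~\ref{p:bucher} the language $L_\tau$ is unavoidable over $B$, and the periodic word $(b_1\cdots b_n)^\infty$ has a factor $bvb\in L_\tau$. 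Distinctness of the $b_j$'s forces the two occurrences of $b$ to sit at positions congruent modulo $n$, so the factor must have the form $b_i(b_{i+1}\cdots b_n b_1\cdots b_i)^p$; pushing this through $\alpha$ and using single-letter collapse of $\le_\tau$ yields $a_i\le_\sigma a_i(a_{i+1}\cdots a_n a_1\cdots a_i)^p$.

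For the inductive step $N>0$, pick $j$ with $|u_j|>1$ and split $u_j=u_j'\cdot a$ with $u_j'\in A^+$, $a\in A$. Applied to the expanded $(n+1)$-tuple $(u_1,\ldots,u_{j-1},u_j',a,u_{j+1},\ldots,u_n)$, which has strictly smaller $N$, the inductive hypothesis produces some $i^*$ and $p$. If $i^*$ points to an entry other than $u_j'$ or $a$, the inequality transfers verbatim because $u_j'\cdot a=u_j$ makes the two cyclic continuations coincide as words in $A^+$. If $i^*$ points to $u_j'$, the IH reads $u_j'\le_\sigma u_j'(a E u_j')^p$ with $E=u_{j+1}\cdots u_n u_1\cdots u_{j-1}$, and right-multiplication by $a$ gives $u_j\le_\sigma u_j'(a E u_j')^p\cdot a=u_j(E u_j)^p=u_j(u_{j+1}\cdots u_n u_1\cdots u_j)^p$ by stability and reassociation. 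If $i^*$ points to the single letter $a$, the IH reads $a\le_\sigma a(E u_j' a)^p=a(E u_j)^p$, and left-multiplication by $u_j'$ yields the same $u_j\le_\sigma u_j(u_{j+1}\cdots u_n u_1\cdots u_j)^p$.

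The principal obstacle is the inductive step: one has to check that splitting a factor $u_j\to u_j'\cdot a$ alters the cyclic continuations in a way that can always be repaired by one-sided multiplication, and that the three subcases above truly exhaust every possible pointer $i^*$ the IH might produce. Once that is settled, the rest reduces to stability of $\le_\sigma$, associativity of concatenation, and the observation (used throughout) that $a\le_\sigma w$ for a single letter $a$ is the same as $\sigma(a)\le\sigma(w)$.
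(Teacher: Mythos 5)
Your proposal is correct: every implication in your cycle goes through, the three subcases of the inductive step do exhaust the possible pointers $i^*$ (it must hit one of the $n+1$ entries $u_1,\dots,u_{j-1},u_j',a,u_{j+1},\dots,u_n$, and in the first case $u_j'$ and $a$ remain adjacent in the cyclic continuation, so they re-merge into $u_j$), and the quantity $N$ strictly decreases at each split, so the induction is well founded. The ingredients match the paper's exactly — Lemma~\ref{l:vetsi-abeceda} together with unavoidability of $L_\varphi$ on the periodic word $(b_1\cdots b_n)^\infty$ for the descent from \wqo\ to the letter-level condition, stability of $\lesigma$ for promoting letter inequalities to word inequalities, and Corollary~\ref{cor:periodic-word} for (v)$\implies$(vi) — but the organization differs in the one substantive step. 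The paper runs (i)$\implies$(ii)$\implies$(iii)$\implies$(iv)$\implies$(v)$\implies$(vi)$\implies$(i) and gets (ii)$\implies$(iii) in a single shot: apply (ii) to the full letter decomposition of $u_1\cdots u_n$ to obtain $a\lesigma a(u_i''u_{i+1}\cdots u_{i-1}u_i'a)^p$ for some factorization $u_i=u_i'au_i''$, then multiply by $u_i'$ on the left and $u_i''$ on the right. Your induction on $\sum_j(|u_j|-1)$ performs the same one-sided multiplications one letter at a time; it costs more bookkeeping but is uniform, and its $n=1$ instance incidentally reproves Proposition~\ref{p:quasiCR} without the paper's three-case analysis of the relation $\rightarrow$ on prefixes. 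Two economies are available: in your (ii)$\implies$(iv) argument, stability lets you multiply the $\lesigma$-inequality itself by $h$ and $t$ rather than passing to $\sigma$-images first, which yields the stronger (iii) and would let you drop the separate induction (recovering the paper's chain, with your base case playing the role of (i)$\implies$(ii)); conversely, keeping your induction, (iii)$\implies$(iv) is immediate since $u\lesigma v$ implies $\sigma(u)\le\sigma(v)$, so the detour through (ii) on the way to (iv) is redundant.
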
 
\begin{proof}
We show the implications from top to bottom.
The omitted implications are easy to see. 
In the conditions (i) -- (iv), the same pair $(A,\sigma)$ is employed. 

\uvoz{(i)$\implies$(ii)}: 
We consider a new alphabet $B=\{b_1,\dots , b_n\}$ of size $n$
and a homomorphism $\alpha : B^+ \rightarrow A^+$ such that
$\alpha(b_i)=a_i$ for all $i\in\{1,\dots ,n\}$. 
We denote the composition $\sigma \circ \alpha$ by $\varphi$. 
By Lemma~\ref{l:vetsi-abeceda}, we know that the relation
$\le_\varphi$ is a \wqo. In particular, the infinite word
$(b_1b_2\dots b_n)^\infty$ has a factor in 
$L_\varphi=\{bwb \,|\, b\in B, w\in B^*, b \le_\varphi bwb  \}$. Therefore, there is $i\in\{1,\dots ,n\}$ 
and $p\in \mathbb N$ such that $b_i\le_\varphi b_i (b_{i+1}\dots b_nb_1\dots b_i)^p$.
Finally, we get $\sigma(a_i)=\varphi(b_i) \le \varphi (b_i (b_{i+1}\dots b_nb_1\dots b_i)^p)=
\sigma(a_i(a_{i+1}\dots a_na_1\dots a_i)^p)$.

\uvoz{(ii)$\implies$(iii)}:
We apply the condition (ii) on the word $u=u_1u_2\dots u_n$ which we see as a concatenation of individual letters. So, there is $i\in\{1,\dots ,n\}$, $p\in\mathbb N$ and $u_i',u_i''\in A^*$
such that $u_i=u'_iau''_i$ and 
$a\le_\sigma a(u''_i u_{i+1} \dots u_n u_1 \dots u_{i-1} u'_i a)^p$.
If we multiply this inequality by the word $u'_i$ on left and by the word $u''_i$ on right, we get 
$u_i\le_\sigma u_i(u_{i+1} \dots u_n u_1 \dots u_{i-1} u_i)^p$.

\uvoz{(v)$\implies$(vi)}:
It follows from Corollary~\ref{cor:periodic-word}.
\qed
\end{proof} 

The condition (ii) from Theorem~\ref{th:ekvivalentni-podminky} was mentioned in~\cite{bucher}
in the setting of rewriting systems, namely it occurs as condition (c) in the concluding section.
Also, the condition in Proposition~\ref{p:quasiCR} is mentioned 
there as the condition (b). It is mentioned in~\cite{bucher} without proof that the conditions 
(a), (b) and (c) are equivalent. 
 
The equivalence of the conditions $(i)$ and $(vi)$ in Theorem~\ref{th:ekvivalentni-podminky}
gives the following result saying that whether the induced quasi-order $\lesigma$ is \wqo\ does not depend on the homomorphism $\sigma$ and it is indeed a property of the ordered semigroup.

\begin{corollary} 
\label{cor:nezavislost}
Let $\sigma : A^+ \rightarrow S$ be a homomorphism onto a finite ordered semigroup
$S$. Then $\le_\sigma$ is a \wqo\ if and only if the semigroup $S$ is congenial.
\qed
\end{corollary}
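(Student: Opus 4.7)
The plan is to derive this corollary as an essentially immediate consequence of Theorem~\ref{th:ekvivalentni-podminky}, specifically the equivalence of conditions (i) and (vi). The statement has two directions, and both translate directly into the language of that theorem once we match up the definitions.

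For the backward implication, if $S$ is congenial, then by the definition given just after Proposition~\ref{p:bucher}, the relation $\le_\varphi$ is a wqo for \emph{every} homomorphism $\varphi\colon B^+\to S$ over every alphabet $B$. In particular, the given onto homomorphism $\sigma\colon A^+\to S$ satisfies this, so $\le_\sigma$ is a wqo. No machinery beyond unfolding definitions is required.

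For the forward implication, I would observe that the hypothesis is exactly condition (i) of Theorem~\ref{th:ekvivalentni-podminky}: there exists an alphabet and an onto homomorphism onto $S$ whose induced quasi-order is a wqo. By the equivalence of (i) and (vi) established in that theorem, condition (vi) also holds, i.e., for every alphabet $B$ and every homomorphism $\varphi\colon B^+\to S$ the relation $\le_\varphi$ is a wqo. This is verbatim the definition of $S$ being congenial.

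Because all of the nontrivial work has been done in proving Theorem~\ref{th:ekvivalentni-podminky}, there is no real obstacle remaining at this stage; the corollary merely repackages that theorem. The only point worth stressing in the write-up is the conceptual content: the property of inducing a wqo is genuinely a property of the ordered semigroup $S$ itself and is independent of the particular onto homomorphism used to test it, which was the stated motivation for Section~\ref{sec:kombinatorika}.
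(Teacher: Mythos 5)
Your proof is correct and matches the paper exactly: the paper also obtains this corollary as an immediate consequence of the equivalence of conditions (i) and (vi) in Theorem~\ref{th:ekvivalentni-podminky}, with the backward direction being a direct unfolding of the definition of congeniality. No further comment is needed.
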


We get the following characterization of congeniality using the condition $(v)$ of Theorem~\ref{th:ekvivalentni-podminky}. 

\begin{corollary} 
\label{cor:pologrupove}
Let $S$ be an ordered semigroup.
Then $S$ is congenial if and only if for every $n\in\mathbb N$ and $s_1, \dots , s_n\in S$, 
there exists $i\in\{1,\dots ,n\}$
such that $s_i \le s_i \cdot (s_{i+1}\cdots s_n \cdot s_1\cdots s_i)^{\omega}$. \dukazjinde
\end{corollary}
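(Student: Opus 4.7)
The plan is to derive this corollary directly from Theorem~\ref{th:ekvivalentni-podminky} by matching its condition (v) against the one stated here. Since the equivalence (v)~$\Longleftrightarrow$~(vi) in that theorem already identifies (v) with the congeniality of $S$, it suffices to prove that condition (v) is equivalent to the same inequality with the idempotent exponent $\omega$ in place of the arbitrary positive integer $p$.

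One direction is immediate: if for all $s_1,\dots,s_n \in S$ there is an index $i$ such that $s_i \le s_i \cdot t^{\omega}$, where $t = s_{i+1}\cdots s_n \cdot s_1 \cdots s_i$, then, because the idempotent $t^{\omega}$ equals $t^m$ for some specific $m \ge 1$, setting $p = m$ yields precisely an instance of condition (v). For the reverse direction, assume (v) provides $s_i \le s_i \cdot t^p$ for some $p \in \mathbb{N}$. Multiplying this inequality repeatedly on the right by $t^p$ and using the stability and transitivity of $\le$ gives $s_i \le s_i \cdot t^{kp}$ for every $k \ge 1$. Since $S$ is finite, the sequence of powers of $t$ is eventually periodic with unique idempotent $t^{\omega}$; consequently any arithmetic progression $p,2p,3p,\dots$ will eventually land on a multiple of the period lying inside the periodic part, and for such $k$ we have $t^{kp} = t^{\omega}$. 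Substituting gives $s_i \le s_i \cdot t^{\omega}$, as required.

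I do not anticipate a significant obstacle. The only mildly delicate point is the final numerical remark that some $kp$ hits the idempotent $t^{\omega}$, which is a classical fact about finite semigroups and follows directly from the definition of $t^{\omega}$ as the unique idempotent among the powers of $t$. The entire argument is short, and the heavy lifting is already absorbed by Theorem~\ref{th:ekvivalentni-podminky}.
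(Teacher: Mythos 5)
Your proposal is correct and follows essentially the same route as the paper: both reduce the corollary to Theorem~\ref{th:ekvivalentni-podminky} (v)$\Leftrightarrow$(vi) and then show that the existential-exponent inequality $s_i \le s_i\cdot t^{p}$ is equivalent to $s_i \le s_i\cdot t^{\omega}$ by iterating the inequality via stability and transitivity until the exponent hits the idempotent power. The only cosmetic difference is your justification that some $kp$ realizes $t^{\omega}$ via eventual periodicity, where the paper simply notes that $t^{k_0}=t^{\omega}$ idempotent gives $t^{pk_0}=(t^{k_0})^{p}=t^{\omega}$; both are valid.
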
 

Unfortunately, it is not possible to bound $n$ in 
Corollary~\ref{cor:pologrupove}.
Indeed, there is a sequence of ordered semigroups $S_m$ such that $S_m$ satisfies the condition in Corollary~\ref{cor:pologrupove} if $n<m$ and does not satisfy the condition for $n=m$

\section{Conclusion}
\label{s:conclusion}

We have shown that for a homomorphism $\sigma\colon A^+\to S$ onto a finite ordered
semigroup, it is decidable whether $\le_\sigma$ is a \wqo. We also proved that the question
does not depend on $\sigma$, but it is indeed a property of the given ordered semigroup.
One may expect more effective or transparent characterization
similar to that of the unordered case in~\cite{kunc}. Nevertheless, our observations
suggest that such a characterization could be more difficult to obtain.

We conclude with a brief discussion of the applications of our results.
We do not see any direct impact of the research to the work done in~\cite{bucher}. 
On the other hand, in~\cite{kunc}, the \wqo\, was applied to prove regularity of maximal
solutions of very general language equations and inequalities (see also~\cite{pin-lata-2020}). 
The theory developed in \cite{kunc} may be naturaly extended to the ordered case, so our new class of ordered semigroups inducing well quasi-orders may find the application there.

\subsection*{Acknowledgement}

We are grateful to the referees for their numerous valuable suggestions 
which improved the paper, in particular, its introductory part. We also thank to Michal Kunc for inspiring discussions.

\newpage

\section*{Appendix}

\subsection*{The Proofs for Technical Lemmas from Section~\ref{s:preliminaries}}

\begin{lemmaagain}{l:vetsi-abeceda}
\label{a:l:vetsi-abeceda}
Let $\sigma\colon A^+ \rightarrow S$ be a homomorphism to an ordered semigroup $S$ such that 
$\le_\sigma$ is a \wqo. Let $B$ be an alphabet, 
$\alpha : B^+ \rightarrow A^+$ be a homomorphism of free semigroups such that 
$\alpha (B)\subseteq A$, and $\varphi=\sigma\circ \alpha$. 
Then the quasi-order $\le_\varphi$ is a \wqo.
\end{lemmaagain}
\begin{proof}
Let $(w_i)_{i\in \mathbb N}$ be a sequence of words over $B$. Then 
$(\alpha(w_i))_{i\in \mathbb N}$ is a sequence of words over $A$. Since $\le_\sigma$ is a \wqo,
there exist $k$ and $\ell$ such that $k<\ell$ and $\alpha(w_k)\le_\sigma \alpha(w_\ell)$. 
Its proof consists of a list of inequalities 
$a_i \le_\sigma u_i$ where $i\in\{1,\dots, m\}$.
The considered factorizations correspond to the factorizations
$w_k=b_1b_2 \dots b_m$ and $w_\ell=v_1v_2\dots v_m$ such that $b_i\in B$, $v_i\in B^+$,
$\alpha(b_i)=a_i$, $\alpha(v_i)=u_i$, because $\alpha$ maps letters to letters.
Now, $a_i \le_\sigma u_i$ means $\sigma(a_i)\le \sigma(u_i)$, which can be written 
as $\varphi(b_i)\le \varphi(v_i)$, that is $b_i \le_\varphi v_i$. 
Composing these inequalities for $i\in\{1,\dots, m\}$, we get $w_k\le_\varphi w_\ell$ and we are done.
\qed 
\end{proof}

\begin{lemmaagain}{l:evaluace}
\label{a:l:evaluace}
A semigroup $S$ is congenial if and only if
$\leq_{\eval_S}$ is a \wqo.
\end{lemmaagain}
\begin{proof}
The implication ``$\Longrightarrow$" is trivial. 

Let $\varphi : B^+ \rightarrow S$ be a homomorphism to a finite ordered 
semigroup $S$ such that $\leq_{\eval_S}$ is a \wqo.
We denote $C=\varphi(B)\subseteq S$ and consider the homomorphism 
$\alpha :C^+ \rightarrow S$ which is the restriction of $\eval_S$ to $C^+$. 
Then $\le_\alpha$ is a \wqo, since $\le_\alpha$ is a restriction of $\le_{\eval_S}$ on $C^+$.
Now, knowing that $\le_\alpha$ is a \wqo, we apply Lemma~\ref{l:vetsi-abeceda} 
and obtain that $\le_\varphi$ is \wqo\ too.\qed
\end{proof}

\subsection*{The Proof of Proposition~\ref{p:(1)}}

\begin{propositionagain}{p:(1)}
\label{a:p:(1)}
Every congenial semigroup $S$ satisfies the condition
\begin{equation}\tag{\ref{eqn:(1)}}
\label{a:eqn:(1)}
\forall s,t\in S\colon s\leq (s\cdot t)\naomeg\cdot s \text{ or } t\leq t\cdot(s\cdot t)\naomeg.
\end{equation}

\end{propositionagain}
\begin{proof}
Let $S$ be a congenial semigroup. Then $\leq_{\eval_S}$ is \wqo.
However, it is useful to distiguish a role of letters and elements of the semigroup $S$.
Thus we replace $S^+$ by an isomorphic semigroup $A^+$ and denote by $\sigma$ 
the homomorphism $\sigma\colon A^+\to S$.
This homomorphism has a useful property, namely, 
the restriction of $\sigma$ to $A$ is a bijection between $A$ and $S$. 
By the assumption, the relation $\le_\sigma$ is a \wqo.

We start the proof by showing that a congenial semigroup $S$ satisfies $s\leq s^{\omega +1}$ for every $s\in S$. 
Consider a sequence of words $\{(a^i)\}_{i=1}^{\infty}$, where $s=\sigma(a)$. Due to $\lesigma$ being a \wqo, there are some positive integers $k<\ell$ such that $a^{k}\lesigma a^{\ell}$. By definition of $\le_\sigma$ we have $a\lesigma a^{n+1}$ for some $n\in \bbN$. This means $s\leq s^{n+1}=s\cdot s^n\leq s^{2n+1}$ and by iteration we conclude that $s\leq s^{\omega+1}$.

We continue the proof with elements $s,t\in S$ and letters $a,b\in A$ such that $s=\sigma(a)$ and $t=\sigma(b)$.
Consider the sequence of words $\{(ab)^i\}_{i=1}^{\infty}$ over the alphabet $A$. 
Then for some positive integers $k<\ell$ it holds that $(ab)^k\lesigma (ab)^{\ell}$. 
Using the definition of the order $\lesigma$ we find 
the first non-trivial inequality of the form $ab\lesigma ab\dots $. 
This means that $a\le_\sigma au$ and $b\le_\sigma v$ for some $u,v\in A^*$ such that $auv$ is a prefix of $(ab)^\ell$ of the length at least $3$. Now, assume that $u\not=\varepsilon$.
Then from $a\le_\sigma au$ it follows $s\leq s\cdot t\cdots r$, where $r\in\{s,t\}$. 
In case of $r=s$, we have for some $p\in\bbN$ that $s\leq s\cdot (t\cdot s\cdots t\cdot s)=s\cdot (t\cdot s)^{p}$. Iterating this inequality, we get $s\leq s\cdot (t\cdot s)\naomeg$. 
In case of $r=t$, we have for some $p\in\bbN$ that $s\leq s\cdot t\cdot s\cdots s\cdot t=(s\cdot t)^{p}$. 
Taking the $\omega\mhyphen$power of both sides, we obtain $s\naomeg \leq (s\cdot t)\naomeg$. 
Now we multiply by $s$ on the right to get $s\naomeg\cdot  s\leq (s\cdot t)\naomeg \cdot s$. 
Using $s\leq s^{\omega+1}$ we conclude that $s\leq (s\cdot t)\naomeg \cdot s = s\cdot (t\cdot s)\naomeg$. 

Assume that $u=\varepsilon$. Then we have $b\le v$, where $v$ starts with $b$ and has the length at least $2$. Similar to the previous case we obtain $t\leq t\cdot s\cdots r$, and we may proceed in 
the same way.
\qed 
\end{proof}

\subsection*{The Additional Material for Section~\ref{s:conditions} }

Example~\ref{ex:brandt} gives a useful insight. Moreover, it inspires us to 
show that an arbitrary finite $0\mhyphen$simple semigroup can be ordered
in such a way that it is congenial. 
Recall that, by a $0\mhyphen$simple semigroup $S$ is a semigroup with 
zero element $0$ that has exactly two distinct ideals: $\{0\}$ and $S$. 
Then in a finite $0\mhyphen$simple semigroup 
we take $0$ as the top element that covers the antichain
of all non-zero elements. 
We denote the class of such ordered semigroups by $0\mhyphen\pv{CS}^-$.

\begin{proposition}\label{prop:nulajed}
Every finite ordered semigroup in $0\mhyphen\pv{CS}^-$ is congenial.
\end{proposition}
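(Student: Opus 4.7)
The strategy is to apply Proposition~\ref{prop:silnejis-nez-1}, which guarantees congeniality once we know that $x \le x\cdot (y\cdot x)\naomeg$ holds for all $x,y \in S$. In an ordered semigroup $S$ from $0\mhyphen\pv{CS}^-$ the order is the flat order with $0$ as the unique top element and all non-zero elements forming an antichain; hence $s\le t$ holds precisely when $s=t$ or $t=0$. Consequently, the task reduces to the purely algebraic statement that, for every $x, y \in S$,
\[
x\cdot (y\cdot x)\naomeg \in \{x, 0\}.
\]

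To establish this claim, I would use the Rees matrix representation $S \cong M^0(G; I, \Lambda; P)$ of a finite $0$-simple semigroup, in which multiplication is $(i,a,\mu)(j,b,\nu) = (i,\, a\, p_{\mu j}\, b,\, \nu)$ if $p_{\mu j}\neq 0$ and $0$ otherwise. Write $x = (i, a, \mu)$. If $yx = 0$ or $(yx)\naomeg = 0$ then $x\cdot (yx)\naomeg = 0$ and the claim holds trivially; otherwise $yx = (j, c, \mu)$ for some $j$ and $c$ (the third coordinate being $\mu$, since $yx$ ends with $x$), and the standard classification of non-zero idempotents in a Rees matrix semigroup forces $p_{\mu j}\neq 0$ and $(yx)\naomeg = (j,\, p_{\mu j}^{-1},\, \mu)$. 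A direct calculation then gives
\[
x\cdot (yx)\naomeg = (i,a,\mu)\,(j,\, p_{\mu j}^{-1},\, \mu) = (i,\, a\cdot p_{\mu j}\cdot p_{\mu j}^{-1},\, \mu) = x.
\]

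The main obstacle is nothing more than fluency with the Rees structure theorem and the identification of its non-zero idempotents; once coordinates are fixed the required identity is a one-line computation and the only case split is whether $(yx)\naomeg = 0$. The same argument can equivalently be phrased Green-theoretically: if $x\cdot (yx)\naomeg$ is non-zero it belongs to the $\gr{H}$-class of $x$ and is fixed on the right by the group-identity of that $\gr{H}$-class, forcing it to equal $x$. Either formulation yields the inequality $x \le x\cdot (y\cdot x)\naomeg$, and Proposition~\ref{prop:silnejis-nez-1} then delivers the congeniality of $S$.
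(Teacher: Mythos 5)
Your proof is correct and follows exactly the paper's route: verify the inequality $x \le x \cdot (y\cdot x)\naomeg$ and invoke Proposition~\ref{prop:silnejis-nez-1}, noting that since $0$ is the top element only the case $x \cdot (y\cdot x)\naomeg \neq 0$ needs attention, where the product equals $x$. The paper asserts this last equality without computation; your Rees-matrix (or equivalently Green-theoretic) verification merely supplies the detail the authors leave implicit.
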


\begin{proof} 
We check that every finite $0\mhyphen$simple semigroup satisfies 
the inequality $ x \le x \cdot (y\cdot x )^\omega$, which gives the statement by
Proposition~\ref{prop:silnejis-nez-1}. 
Since $0$ is the top element, we have to discuss only the case 
when $x \cdot (y\cdot x )^\omega \not = 0$. However, in this case we have 
$ x \cdot (y\cdot x )^\omega = x $.
\qed 
\end{proof}

Another class of ordered semigroups to which the example $B_2^-$ fits very well is the class of inverse semigroups.
This class is extensively studied as it is a natural generalization of groups.
For reader not familiar with that theory, we just mention that 
every inverse semigroup may be represented as a semigroup of partial bijections
on an aprropriate set $X$
(the so-called Vagner-Preston representation theorem). 
In this representation, every element can be viewed as a  
relation, {\it i.e.}, the element is a subset of $X\times X$.    
Thus elements can be compared by the inclusion $\subseteq$ 
if they are represented by relations.
In this way, any inverse semigroup is implicitly ordered, 
however we worked with the dual order in Example~\ref{ex:brandt}. We call this dual order \emph{anti-natural}. 
Since every considered ordered 
inverse semigroup satisfies the inequality $x \le x\cdot (y \cdot x)^\omega$,
we obtain the next consequence of Proposition~\ref{prop:silnejis-nez-1}.

\begin{proposition}\label{prop:inverse}
Let $S$ be a finite inverse semigroup ordered anti-naturally. Then  $S$ is congenial.\qed
\end{proposition}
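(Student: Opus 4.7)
The plan is to reduce to Proposition~\ref{prop:silnejis-nez-1} by checking that every finite inverse semigroup $S$, endowed with the anti-natural order $\le$, satisfies $x\le x\cdot(y\cdot x)\naomeg$ for all $x,y\in S$. Once that inequality is established, the proposition cited applies verbatim and yields congeniality, with no further work required.

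To verify the inequality I would first recall the natural partial order on an inverse semigroup: $u\le_{\mathrm{nat}} v$ iff $u=e\cdot v$ for some idempotent $e$, equivalently $u=v\cdot f$ for some idempotent $f$. Under the Vagner--Preston representation this order is precisely the inclusion of partial bijections, so the anti-natural order used in the paper is exactly its dual, i.e.\ $x\le y$ iff $y\le_{\mathrm{nat}} x$. A standard and immediate consequence of the definition of $\le_{\mathrm{nat}}$ is the fundamental fact that $z\cdot e\le_{\mathrm{nat}} z$ for every $z\in S$ and every idempotent $e\in S$ (take $f=e$ in the right-hand characterisation). Since $(y\cdot x)\naomeg$ is idempotent by the definition of the $\omega$-power, specialising this fact to $z=x$ and $e=(y\cdot x)\naomeg$ gives $x\cdot(y\cdot x)\naomeg\le_{\mathrm{nat}} x$, and passing to the anti-natural order this reads exactly $x\le x\cdot(y\cdot x)\naomeg$, as required.

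The only point requiring care is keeping the direction of the order straight: in the natural order the idempotent factor always ``shrinks'' $z$, so once the order is reversed the idempotent multiplier must appear on the side making the product \emph{larger}, which is precisely the pattern demanded by the hypothesis of Proposition~\ref{prop:silnejis-nez-1}. Beyond this bookkeeping I expect no real obstacle; the whole argument is essentially a one-line application of a textbook property of the natural order on inverse semigroups combined with the sufficient condition already proved in Section~\ref{s:conditions}.
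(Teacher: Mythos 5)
Your proposal is correct and follows exactly the paper's route: the paper likewise derives this proposition as a consequence of Proposition~\ref{prop:silnejis-nez-1} by asserting that the anti-natural order satisfies $x\le x\cdot(y\cdot x)^\omega$. Your verification of that inequality via the standard fact that $z\cdot e$ lies below $z$ in the natural order for any idempotent $e$ is the detail the paper leaves implicit, and it is accurate.
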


The following theorem describes a condition generalizing that from Proposition~\ref{prop:silnejis-nez-1}. The proof is based on the same ideas. 
After the proof we also explain that a semigroup ordered by the equality
satisfies the assumtions from the statement if and only if the semigroup 
is a chain of simple semigroups.

Before we formulate the result, we need to recall the definition of the one sided Green relation $\gr{L}$. For $s,t\in S$,
we have $s \gr{L} t$ if there are elements $x,y\in S^1$ 
such that $x\cdot s=t$ and $y\cdot t=s$. Under this definiton, the relation  
$\gr{L}$ is an equivalence relation. 
We also write $s \le_\gr{L} t$ if there is $y\in S^1$ 
such that $y\cdot t=s$. 
(The dual relations $\gr{R}$ and $\le_\gr{R}$ are not used in our paper.)

For a non-empty subset $X$ of a semigroup $S$ we write $\langle X\rangle $ for
a subsemigroup of $S$ generated by $X$. By $E(S)$ we mean the set of all idempotents of a semigroup $S$. 

\begin{theorem}\label{th:polstar}
Let $S$ be a finite ordered semigroup. 
If the semigroup $S$ satisfies 
\begin{equation}
\label{eqn:(polstar)}
(\forall F\subseteq S , F\neq \emptyset) (\exists s\in F) (\forall e\in E(\langle F \rangle)): e\le_{\gr{L}} s \implies s\leq s\cdot e,
\end{equation}
then $S$ is congenial. 
\end{theorem}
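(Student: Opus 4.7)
The plan is to verify, for an arbitrary homomorphism $\sigma : A^+ \to S$, condition~(\ref{twosides}) of Proposition~\ref{p:bucher}, namely that the language $L_\sigma$ is unavoidable. Fix an arbitrary infinite word $v \in A^\infty$; after replacing $v$ by a suitable suffix I may assume that every letter occurring in $v$ occurs in it infinitely often. Let $A_0 \subseteq A$ denote the set of letters occurring in $v$ and put $F = \sigma(A_0) \subseteq S$. Applying the hypothesis~(\ref{eqn:(polstar)}) to the non-empty set $F$ furnishes an element $s \in F$ with the stated property; then pick a letter $a \in A_0$ with $\sigma(a) = s$, which exists and appears infinitely often in $v$.

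Factorize $v$, starting from its first occurrence of $a$, as $a\,w_1\,a\,w_2\,a\,w_3 \cdots$, where each $w_i \in A_0^*$ is the block strictly between the $i$-th and $(i{+}1)$-st occurrence of $a$. For $1 \le i \le j$ set $p_{i,j} = \sigma(w_i\,a\,w_{i+1}\,a \cdots w_j\,a) \in S$; these satisfy $p_{i,j} \cdot p_{j+1,k} = p_{i,k}$, so a standard infinite Ramsey argument delivers an infinite sequence of indices $i_1 < i_2 < \cdots$ together with a fixed idempotent $f \in S$ such that $p_{i_k+1,\,i_{k+1}} = f$ for every $k$.

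The word realising $f$ uses only letters of $A_0$, hence $f \in \langle F\rangle$; and because it ends with the letter $a$, we have $f = z \cdot s$ for some $z \in S^1$, i.e.\ $f \le_\gr{L} s$. The chosen $s$ then yields $s \le s \cdot f$ by~(\ref{eqn:(polstar)}), which translates to $\sigma(a) \le \sigma(a\,w_{i_1+1}\,a \cdots w_{i_2}\,a)$, and the right-hand side is $\sigma$ applied to a factor of $v$ beginning and ending with $a$. This factor therefore belongs to $L_\sigma$, contradicting the assumption that $v$ avoids $L_\sigma$, and condition~(\ref{twosides}) of Proposition~\ref{p:bucher} is verified. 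The main technical point will be arranging that the Ramsey-selected idempotent $f$ simultaneously lies in $\langle F\rangle$ (which forces the preliminary restriction to letters of $A_0$) and is $\gr{L}$-below $s$ (which forces the selected factor to terminate at an occurrence of~$a$); once both are set up, the hypothesis and a direct translation between words and semigroup elements close the argument.
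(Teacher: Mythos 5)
Your proof is correct and follows essentially the same route as the paper's: restrict to a suffix where every occurring letter occurs infinitely often, apply the hypothesis to the set $F$ of $\sigma$-images of those letters to select the letter $a$, factorize at occurrences of $a$, extract an idempotent $f$ from consecutive products (the paper uses the finitary Lemma~\ref{l:folk} where you invoke infinite Ramsey, an immaterial difference), and observe that $f\in E(\langle F\rangle)$ with $f\le_{\gr{L}}\sigma(a)$ so that the hypothesis produces a factor in $L_\sigma$. The only cosmetic divergence is that you argue for an arbitrary homomorphism $\sigma$ directly, whereas the paper first reduces to $\eval_S$ via Lemma~\ref{l:evaluace}.
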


\begin{proof}
We deal with the situation when $A=S$ and $\eval_S$ is a considered homomorphism.
Let $v$ be an infinite word over
an alphabet $A$. Denote $v'$ its suffix which contains every letter infinitely many times. 
Now we take 
$$F=\{\sigma(a)\,|\,a\in A, a \text{ occurs in }v'\}$$
and 
the condition~(\ref{eqn:(polstar)}) gives us the choice of the letter $a$, which is contained in $v'$ infinitely many times. The rest of the proof is analogous to the 
proof of Proposition~\ref{prop:silnejis-nez-1}. \qed 
\end{proof}

We explain the situation when the considered semigroup is the chain of simple semigroups.
For every $F$ we take $s\in F$ in the lowest $\gr{J}\mhyphen$class. The idempotent $e$ satisfies $e\geq_{\gr{J}}s$ and with $e\le_{\gr{L}}s$ we have $e\gr{L}s$. The idempotent $e$ is a right neutral element in its $\gr{L}\mhyphen$class. We get $s\cdot e=s$ and we see that every chain of simple semigroups satisfies the condition~(\ref{eqn:(polstar)}).

On the other hand, if the semigroup $S$ is ordered by equality and satisfies 
the condition~(\ref{eqn:(polstar)}), then we may take $F=\{t,t'\}$.
Let $s\in F$ be the element satisfying
the condition~(\ref{eqn:(polstar)}). 
If $s=t$ then we take $e=(t'\cdot t)^\omega$, for which $t\le t\cdot (t'\cdot t)^\omega$ follows.
If $s=t'$ then we take $e=(t\cdot t')^\omega$, and obtain $t'\le t'\cdot (t\cdot t')^\omega$.
That means that the semigroup is a chain of simple semigroups.

\subsection*{The Additional Material for Section~\ref{s:automaton}}

\begin{theoremagain}{th:OKautomat}
\label{a:th:OKautomat}
Let $\sigma : A^+ \rightarrow S$ be a homomorphism onto a finite ordered semigroup
$S$.  Then $\le_\sigma$ is a \wqo\ if and only if the automaton 
$\mathcal{A}_{\sigma}$ does not contain 
an infinite path starting in the initial state $\iota$.
\end{theoremagain}

\begin{proof}
If $\le_\sigma$ is not a \wqo, then by Propositon~\ref{p:bucher} there exists an infinite word
avoiding $L_{\sigma}=\{ava \mid a\in A, v\in A^*, a\le_\sigma ava\}$. Thus, for its each finite prefix $w$, we know that 
$\delta(\iota,w)$ is defined in $\mathcal A_\sigma$ by Lemma~\ref{l:automat}. Hence there is 
an infinite path in $\mathcal A_\sigma$ starting in the initial state.

On the contrary, assume that $\mathcal{A}_{\sigma}$ contains an infinite path starting in the initial state $\iota$. Then the label of that path is an infinite word avoiding the set of words 
$L_\sigma$ by Lemma~\ref{l:automat}. Again, by Proposition~\ref{p:bucher}, we get that $\le_\sigma$ is not a \wqo.
\qed 
\end{proof}

We add an illustrative example of the construction of the automaton $\mathcal A_\sigma$.

\begin{example}
In Example \ref{ex:brandt} we saw that $B_2^+\notin \pekna$. 
If we consider $\sigma$ as a canonical mapping sending a letter 
from $A=\{a,b\}$ onto the element in the semigroup $B_2^+$ of the same name, 
then $S\times A\times A$ has 20 elements. 
The automaton $\mathcal{A}_{\sigma}$ consists of 
states $\iota, \bar a, \bar b$ and then states in $\mathcal P$. 
Every state $M\in\mathcal P$ is coherent, non-empty and disjoint with
$F=\{(a,a,a),(b,b,b)\}$. If we remove these two elements from 
$S\times A\times A$ and divide the resulting set into two parts by the last coordinate, 
we obtain two nine-elements sets 
$M_a=\{(s,x,a)\in S\times A\times A \mid s\not= a \vee x\not =a\}$, and 
$M_b=\{(s,x,b)\in S\times A\times A \mid s\not= b \vee x\not =b \}$.
Hence the number of states of $\mathcal{A}_\sigma$ is $3+2\cdot (2^9-1)=1025$. 

However, we are not interested in all states, but only those
reachable from the initial state $\iota$. We see that only tripples of the form 
$(\sigma(u),\first u, \last u)$, where $u\in A^+\setminus A$, may occur as elements of these states. 
Hence, in the previous considerations, we may replace $M_a$ by 
$\{(0,a,a),(0,b,a),(ba,b,a)\}$ and $M_b$ by 
$\{(0,a,b),(0,b,b),(ab,a,b)\}$. This gives a better uper bound for the size of the part of $\mathcal{A}_\sigma$ reachable from $\iota$. (The bound is $3+2\cdot (2^3-1)=17$.) 
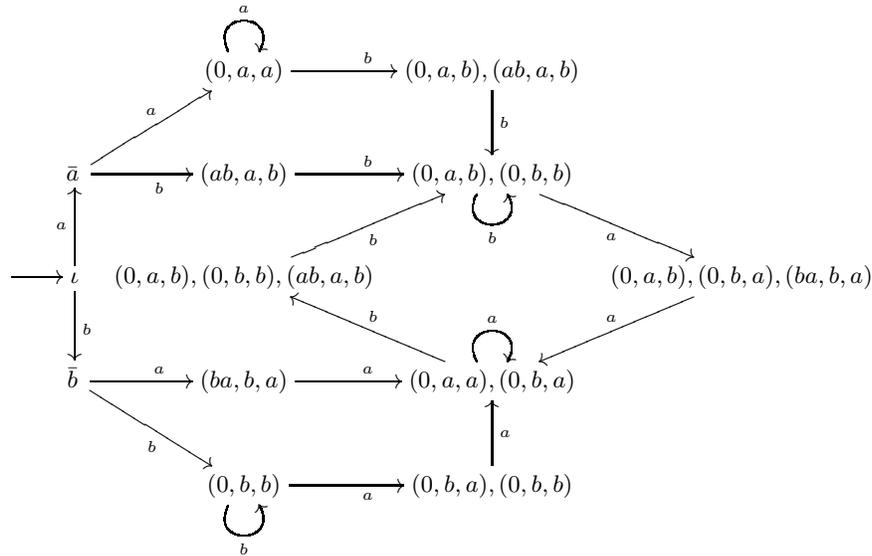
\begin{figure}[ht]
\centering
\[
\xymatrix@C=2mm{
&&& (0,a,a) \ar[r]^b  \ar@(ul,ur)[]^a & (0,a,b),(ab,a,b) \ar[d]^b & \\
&& \bar{a}\, \ar[ur]^a \ar[r]_b & (ab,a,b) \ar[r]^b & (0,a,b),(0,b,b) \ar[dr]_a \ar@(dl,dr)[]_b &\\ 
\ar[rr] && \iota  \ar[u]^a\ar[d]^b   & (0,a,b),(0,b,b),(ab,a,b) \ar[ur]_b & & (0,a,b),(0,b,a),(ba,b,a)  \ar[dl]_a \\
&&\bar{b}\,\ar[r]^a \ar[dr]_b & (ba,b,a)\ar[r]^a & (0,a,a),(0,b,a) \ar[ul]_b \ar@(ul,ur)[]^a & \\
&&& (0,b,b) \ar[r]_a \ar@(dl,dr)[]_b & (0,b,a),(0,b,b) \ar[u]_a &}
\]
    \caption{Reachable states in the automaton $\mathcal{A}_\sigma$ for the ordered semigroup $B_2^+$.}
    \label{fig:my_label2}
\end{figure}
Fig.~\ref{fig:my_label2} describes that part of the automaton $\mathcal{A}_\sigma$ completely. 
To simplify the notation, the states in $\mathcal P$
are labeled by the lists of elements (\textit{i.e.}, without curly brackets).

\end{example}

\subsection*{The proof of Corollary~\ref{cor:pologrupove}}

\begin{coragain}{cor:pologrupove} 
\label{a:cor:pologrupove}
Let $S$ be an ordered semigroup.
Then $S$ is congenial if and only if for every $n\in\mathbb N$ and $s_1, \dots , s_n\in S$, 
there exists $i\in\{1,\dots ,n\}$
such that $s_i \le s_i \cdot (s_{i+1}\cdots s_n\cdot  s_1\cdots s_i)^{\omega}$. 
\end{coragain} 
 \begin{proof}
With respect to Theorem~\ref{th:ekvivalentni-podminky}, it is enough to show that the two following conditions are equivalent for a given pair of elements $s,t\in S$:
$$(a)\ \  \text{there exists } p \in \mathbb N \text{ such that } s\le s \cdot (t\cdot s)^p;\qquad
(b)\ \ s\le s \cdot (t\cdot s)^\omega .$$
If (a) holds, then  we may iterate the inequality and obtain $$s\le s \cdot (t\cdot s)^p \le s \cdot (t\cdot s)^p \cdot (t\cdot s)^p \le 
\dots \le s \cdot (t\cdot s)^{pk}$$
for an arbitrary $k\in \mathbb N$. Since there is an integer 
$k\in \mathbb N$ such that $(t\cdot s)^k=(t\cdot s)^\omega$, we get 
the condition (b).  

The existence of such $k$ gives the implication \uvoz{(b)$\implies$(a)} directly. 
\qed
\end{proof}

\subsection*{The Sequence of Ordered Semigroups $S_m$}

For $m\ge 2$ we construct an ordered semigroup $S_m$ as follows. 
We describe the semigroup by the presentation 
over the $m$-letter alphabet $A_m=\{a_1,\dots a_m\}$: 
$$ S_m=\langle\, a_1, \dots , a_m \mid 
a_ia_j=0\  (\text{for } i,j \in\{1,\dots , m\} \text{ such that } 
j-i\not\in\{1, 1-m\}), $$
$$ \phantom{asaddddfsdfas}
(a_i\dots a_ma_1\dots a_{i-1})^2=a_i\dots a_ma_1\dots a_{i-1}
\ (\text{for } i\in\{1,\dots , m\} ) \rangle . $$
We explain the meaning of the previous semigroup presentation 
informally. We denote $u=a_1\dots a_m$.
The non-zero elements of $S_m$ are words of length at most $2m-1$ which are factors of $u^\infty$.
Every word which is not a factor of $u^\infty$ is
identified with $0$ by the rule $a_ia_j=0$ for appropriate indices. 
Now, any word which is a factor of $u^\infty$ of length at least $2m$
contains a factor $w$ of length $2m$ which is a conjugate of $uu$. Each such factor $w$ may 
be shortened  by the rule $(a_i\dots a_ma_1\dots a_{i-1})^2=a_i\dots a_ma_1\dots a_{i-1}$.
This procedure may be repeated until the resulting word has length smaller than $2m$.
The important property of the previous reduction is that the prefixes and suffixes of length $m$ are kept.
 
The description explains both the natural homomorphism $\sigma_m : A_m \rightarrow S_m$
and the multiplication on $S_m$. For the reader familiar with the structure theory 
of semigroups, we mention that $S_m$ has the minimal ideal $\{0\}$, one regular class and irregular singleton $\gr{J}$-classes consisting of factors of $uu$ of length smaller than $m$. The regular class consists of factors of $uuu$ of length at least $m$ and at most $2m-1$. It is a $m\times m$ box with singleton $\gr{H}$-classes and exactly $m$ idempotents (conjugates of $u$).

Finally, we introduce the order on $S_m$. At first, we put $0$ to be the largest element. Then
for two factors $x,y$ of $u^\infty$ of length smaller than $2m-1$, we have 
$$
x\le y\quad  \text{if} \quad  
\left\{ 
\begin{array}{l}
 x=y \text{  or}\\ 
 1<|x|< m \le |y| \ \wedge\  x \text{ is both the prefix and the suffix of } y .
 \end{array}\right. $$
One may check that the relation $\le$ is a stable order on the semigroup $S$. We point out that  
$a_i\le y$ only for $y\in \{a_i,0\}$.

In Corollary~\ref{cor:pologrupove}, we consider 
the following condition
for a given integer $n$:
\begin{equation}
\label{eq:pologrupaSm}
(\forall s_1, \dots , s_n\in S) (\exists i\in\{1,\dots ,n\}) : 
s_i \le s_i\cdot  (s_{i+1}\cdots s_n \cdot s_1\cdots s_i)^{\omega}. 
\end{equation}
If we want to test whether $S_m$ satisfies the condition~(\ref{eq:pologrupaSm}) for a given $n$, we distinguish two cases depending on whether $m$ divides $n$ or not. 

First assume that $m$ divides $n$. For every $i\in\{1,\dots, n\}$ we denote $[i,m]$ the remainder after dividing $i$ by $m$ and put $s_i=a_{[i,m]}$. Then $s_1\cdot s_2\cdots s_n=(a_1\cdots a_m)^{\frac{n}{m}}=a_1\dots a_m$ 
and we see that $S_m$ does not satisfy the condition~(\ref{eq:pologrupaSm}). 

If we assume that $m$ does not divide $n$, then 
$(s_{i+1}\cdots s_n\cdot  s_1\cdots s_i)^{\omega}\not=0$ only if $s_1 s_2\dots s_n$ is conjugate to some power
of $a_1\dots a_m$. This is not possible when each $s_i$ is an element in $A_m$. 
Thus there is an index $i$ such that $s_i\not\in A_m$, and the condition~(\ref{eq:pologrupaSm}) is valid. 

We conclude with the statement that $S_m$ satisfies  the condition~(\ref{eq:pologrupaSm}) if and only if $m$ does not divide $n$. In particular, 
$S_m$ satisfies the condition~(\ref{eq:pologrupaSm}) for $n<m$ and does not satisfy the condition for $n=m$.

\newpage
%
%
\nocite{*}
\bibliographystyle{splncs04}
\bibliography{KK-congeniality}

\begin{thebibliography}{10}
\providecommand{\url}[1]{\texttt{#1}}
\providecommand{\urlprefix}{URL }
\providecommand{\doi}[1]{https://doi.org/#1}

\bibitem{bucher}
Bucher, W., Ehrenfeucht, A., Haussler, D.: On total regulators generated by
  derivation relations. Theor. Comput. Sci.  \textbf{40},  131--148 (1985).
  \doi{10.1016/0304-3975(85)90162-8}

\bibitem{alessvarr-dlt08}
D'Alessandro, F., Varricchio, S.: Well quasi-orders in formal language theory.
  In: Ito, M., Toyama, M. (eds.) DLT 2008. LNCS, vol.~5257, pp. 84--95.
  Springer (2008). \doi{10.1007/978-3-540-85780-8\_6}

\bibitem{ehrenfeuchtHR83}
Ehrenfeucht, A., Haussler, D., Rozenberg, G.: On regularity of context-free
  languages. Theor. Comput. Sci.  \textbf{27},  311--332 (1983).
  \doi{10.1016/0304-3975(82)90124-4}

\bibitem{hss04}
Haase, C., Schmitz, S., Schnoebelen, P.: {The Power of Priority Channel
  Systems}. {Logical Methods in Computer Science}  \textbf{10}(4) (2014).
  \doi{10.2168/LMCS-10(4:4)2014}

\bibitem{higman}
Higman, G.: Ordering by divisibility in abstract algebras. Proceedings of the
  London Mathematical Society  \textbf{s3-2}(1),  326--336 (1952).
  \doi{10.1112/plms/s3-2.1.326}

\bibitem{howie}
Howie, J.M.: An Introduction to Semigroup Theory. Academic Press (1976)

\bibitem{kruskal}
Kruskal, J.B.: The theory of well-quasi-ordering: {A} frequently discovered
  concept. J. Comb. Theory, Ser. {A}  \textbf{13}(3),  297--305 (1972).
  \doi{10.1016/0097-3165(72)90063-5}

\bibitem{kunc}
Kunc, M.: Regular solutions of language inequalities and well quasi-orders.
  Theor. Comput. Sci.  \textbf{348},  277--293 (2005).
  \doi{10.1016/j.tcs.2005.09.018}

\bibitem{delucavarri}
de~Luca, A., Varricchio, S.: Finiteness and Regularity in Semigroups and Formal
  Languages. Springer (1999). \doi{10.1007/978-3-642-59849-4}

\bibitem{pin-varieties}
Pin, J.{\'E}.: Varieties of Formal Languages. Foundations of computer science,
  North Oxford Academic (1986)

\bibitem{pin-handbook}
Pin, J.{\'E}.: Syntactic semigroups. In: Rozenberg, G., Salomaa, A. (eds.)
  Handbook of Formal Languages. vol.~1, pp. 679--746. Springer (1997).
  \doi{10.1007/978-3-642-59136-5\_10}

\bibitem{pin-lata-2020}
Pin, J.{\'E}.: How to prove that a language is regular or star-free? In:
  Leporati, A., Mart{\'{\i}}n{-}Vide, C., Shapira, D., Zandron, C. (eds.) LATA
  2020. LNCS, vol. 12038, pp. 68--88. Springer (2020).
  \doi{10.1007/978-3-030-40608-0\_5}

\bibitem{ss85}
Sch{\"{u}}tte, K., Simpson, S.G.: Ein in der reinen zahlentheorie unbeweisbarer
  satz {\"{u}}ber endliche folgen von nat{\"{u}}rlichen zahlen. Arch. Math.
  Log.  \textbf{25}(1),  75--89 (1985). \doi{10.1007/BF02007558}

\end{thebibliography}

\end{document}